\newcommand\ie{i.\,e.\xspace}
\newcommand\eg{e.\,g.\xspace}
\newcommand\cf{cf.\xspace}
\newcommand\A{$\mathcal{A}$\xspace}
\newcommand\B{$\mathcal{B}$\xspace}
\def\sym#1{\ifmmode^{#1}\else\(^{#1}\)\fi}
\newcommand{\norm}[1]{\left\lVert#1\right\rVert}
\renewcommand{\fps@figure}{H}         
\renewcommand{\fps@table}{H}         
\newcolumntype{x}[1]{>{\centering\arraybackslash\hspace{0pt}}p{#1}}
\newcolumntype{C}[1]{>{\centering\arraybackslash}p{#1}}
  \pgfplotsset{compat=1.9}
\tikzstyle{block} = [rectangle, draw, text width=7em, text centered, rounded corners, minimum height=3.5em]
\tikzstyle{wideblock} = [rectangle, draw, text width=9em, text centered, rounded corners, minimum height=3.5em]
\tikzstyle{medblock} = [rectangle, draw, text width=11em, text centered, rounded corners, minimum height=2.0em]
\tikzstyle{tinyblock} = [rectangle, draw, text width=7em, text centered]
\tikzstyle{line} = [draw, -latex']
\tikzstyle{tinyblockdashed} = [rectangle, dashed, draw, text width=7em, text centered]
\tikzstyle{line} = [draw, -latex']
\DeclareRobustCommand{\rvdots}{%
  \vbox{
    \baselineskip4\p@\lineskiplimit\z@
    \kern-\p@
    \hbox{.}\hbox{.}\hbox{.}
  }}
\algrenewcommand\algorithmicrequire{\textbf{Input:}}
\algrenewcommand\algorithmicensure{\textbf{Output:}}
\renewcommand{\norm}[1]{\left\lVert#1\right\rVert}
\def\BState{\State\hskip-\ALG@thistlm}
\xpatchcmd{\algorithmic}{\itemsep\z@}{\itemsep=1ex}{}{}
\newtheorem{remark}{Remark}[section]
\theoremstyle{definition}
\newtheorem{definition}{Definition}[section]
\theoremstyle{theorem}
\newtheorem{theorem}{Theorem}[section]
\theoremstyle{lemma}
\newtheorem{lemma}{Lemma}[section]
\begin{document}


\RUNAUTHOR{Senoner et al.}

\RUNTITLE{{Addressing distributional shifts in operations management: The case of order fulfillment in customized production}}

\TITLE{Addressing distributional shifts in operations management:\\ The case of order fulfillment in customized production}

\ARTICLEAUTHORS{%
\AUTHOR{Julian Senoner}
\AFF{ETH Zurich, \EMAIL{jsenoner@ethz.ch}} 
\AUTHOR{Bernhard Kratzwald}
\AFF{ETH Zurich, \EMAIL{bkratzwald@ethz.ch}} 
\AUTHOR{Milan Kuzmanovic}
\AFF{ETH Zurich, \EMAIL{mkuzmanovic@ethz.ch}} 
\AUTHOR{Torbjørn H. Netland}
\AFF{ETH Zurich, \EMAIL{tnetland@ethz.ch}} 
\AUTHOR{Stefan Feuerriegel}
\AFF{LMU Munich, \EMAIL{feuerriegel@lmu.de}}
} 

\ABSTRACT{
\noindent
To meet order fulfillment targets, manufacturers seek to optimize production schedules. Machine learning can support this objective by predicting throughput times on production lines given order specifications. However, this is challenging when manufacturers produce customized products because customization often leads to changes in the probability distribution of operational data---so-called \emph{distributional shifts}. Distributional shifts can harm the performance of predictive models when deployed to future customer orders with new specifications. The literature provides limited advice on how such distributional shifts can be addressed in operations management. Here, we propose a data-driven approach based on adversarial learning and job shop scheduling, which allows us to account for distributional shifts in manufacturing settings with high degrees of product customization. We empirically validate our proposed approach using real-world data from a job shop production that supplies large metal components to an oil platform construction yard. Across an extensive series of numerical experiments, we find that our adversarial learning approach outperforms common baselines. Overall, this paper shows how production managers can improve their decision-making under distributional shifts. 
}

\KEYWORDS{machine learning, adversarial learning, distributional shifts, order fulfillment, manufacturing} 

\maketitle

%

\sloppy
\raggedbottom

\section{Introduction} 
\label{sec:Introduction}

Order fulfillment is a crucial performance indicator in manufacturing \citep{Song.1999}. Achieving on-time delivery can be particularly difficult when manufacturers produce highly customized products \citep{Cohen.2003}. For these manufacturers, incoming customer orders can involve multiple non-standard production tasks that must be planned effectively. This is challenging because order throughput times vary across customer specifications. Completing production orders too early causes unnecessary inventory costs and increases the risk of rework in case of engineering changes, whereas delays diminish service levels and can lead to substantial economic losses \citep{Cohen.2003}. Therefore, reducing deviations from order delivery due dates is important for maintaining a good cost performance. Motivated by this objective, manufacturers continuously seek to improve their planning accuracy.

Machine learning can support manufacturers in achieving high planning accuracy. Specifically, machine learning can improve managerial decision-making by utilizing historical observations to predict throughput times given order specifications \citep[\cf][]{Grabenstetter.2014}. This enables production managers to optimize their production plans and reduce costs by mitigating deviations from order delivery due dates. However, implementing machine learning in manufacturing settings with high degrees of product customization is challenging because orders may involve unique specifications and high variety (\eg, producing customized components for one-of-a-kind oil platforms). Due to heterogeneity between different customer orders, such settings typically generate operational data for which probability distributions of different customer orders are highly dissimilar. In the machine learning literature, this is commonly referred to as \emph{distributional shifts}. Consequently, the standard assumption of identically distributed samples in predictive analytics \citep{Hastie.2009} is violated, and the prediction performance may deteriorate.

Scholars have argued that distributional shifts are a key challenge of successfully applying predictive analytics in management \citep{Simester.2020}. While there is increasing traction of predictive analytics in Operations Management~(OM) \citep[\eg,][]{Baardman.2018, Misic.2020, Olsen.2020, Bastani.2021}, there is, however, scarce advice on how distributional shifts can be addressed. Existing approaches to address distributional shifts make use of model retraining \citep{Cui.2018} or transfer learning via fine-tuning \citep{Pan.2010, McNamara.2017, Kouw.2018, Bastani.2020a}. These methods are effective in make-to-stock manufacturing (\eg, fast-moving consumer goods) characterized by low variety and high volumes. Yet, they are likely to fall short in manufacturing settings characterized by high variety and low volumes. Both model retraining and transfer learning via fine-tuning require labeled data for both historical and forthcoming orders, yet labels for the latter are not available when dealing with new orders. As such, tailored approaches for customized production are needed. 

To meet order fulfillment targets in customized production, we develop a data-driven approach to predict order throughput times and then perform job shop scheduling. Key to our approach is that we predict when the order will be finished and, then, use this information to optimize scheduling decisions. Due to the operational heterogeneity in customized production, making predictions of throughput times is challenging as there are distributional shifts between different customer orders, which violate the standard assumption of machine learning and which can reduce prediction performance. In our approach, we account for distributional shifts between different customer orders through the use of adversarial learning. Adversarial learning \citep{Goodfellow.2014} is a recent innovation in artificial intelligence to make inferences under two opposing---thus adversarial---objectives. In our case, we leverage adversarial learning to combine the following two objectives: (1)~to predict throughput times with the best possible performance and, simultaneously, (2)~to minimize the distance of the neural network representations of the operational data between the historical and the forthcoming order setting. Importantly, the latter accounts for the new specifications of the forthcoming order setting (without knowing the ground-truth labels) and, as a result, yields better predictions of throughput times and better scheduling decisions for future orders. 

We evaluate the effectiveness of our proposed approach in a series of numerical experiments using real-world industrial data from Aker Solutions (from here on Aker), a leading engineering company in the energy sector. We focus on a job shop production that supplies large metal components for the construction of oil platforms. In this case, components produced for different customer order settings (\ie, new oil platforms) involve many idiosyncratic specifications. We find a substantial distributional shift between order settings; that is, the conditions under which the components are produced are highly dissimilar. Especially when starting to produce components with new specifications, it is challenging for na{\"ive} predictions using machine learning to provide accurate estimates of throughput times. This is addressed in our approach based on adversarial learning as it explicitly accounts for the distributional shifts between different order settings. We then compare our proposed approach for job shop scheduling against several data-driven baselines. Across an extensive series of numerical experiments, we find that our approach outperforms the baselines and can offer considerable cost savings.   

This paper makes three main contributions. First, we provide empirical evidence of how predictive analytics can improve order fulfillment in customized production. Second, we contribute to the practice and literature on predictive analytics in OM \citep[\eg,][]{Misic.2020, Olsen.2020, Bastani.2021}. Here, we address the problem of distributional shifts, which is particularly pertinent in settings subject to extensive heterogeneity. Third, we  tailor adversarial learning to an OM context and study its operational value for job shop scheduling. Altogether, our work has direct managerial implications: manufacturers need to identify and cater for distributional shifts in customized production. Simply deploying predictive analytics without addressing distributional shifts may result in subpar decisions. 

This paper is structured as follows. \Cref{sec:background} motivates the challenges of predictive analytics in customized production under distributional shifts, yet revealing a scarcity of methods for addressing distributional shifts in OM. \Cref{sec:aker} introduces our empirical setting at Aker. \Cref{sec:methods} proposes our model for job shop scheduling, which uses adversarial learning to predict throughput times while addressing distributional shifts. \Cref{sec:simulation_study} evaluates our approach in a series of numerical experiments under different operational contexts. We report various robustness checks in \Cref{sec:robustness_checks}. Based on our results, \Cref{sec:implications} discusses the implications for making robust inferences under distributional shifts in OM. 


\section{Related work} 
\label{sec:background}

This paper is motivated by the practical problem of making robust inferences under distributional shifts and, for this purpose, draws upon statistical learning theory. We see four streams of research as particularly relevant to this work: (1)~predictive analytics in OM, (2)~job shop scheduling, (3)~distributional shifts, and (4)~adversarial learning.

\subsection{Predictive analytics in OM}
\label{sec:predictive_analytics_om}

Predictive analytics can support managers in making decisions by modeling uncertain operational outcomes \citep{Choi.2018,Cohen.2018,Feuerriegel.2022}. Recent methodological advances, accompanied by the increasing availability of data, have accelerated the adoption of predictive analytics in OM \citep{Misic.2020, Olsen.2020, Bastani.2021, jakubik2022}. In the following, we provide an overview of predictive analytics in OM. For a detailed review of the literature, see \cite{Misic.2020} and \citet{Bastani.2021}.

There are many promising demonstrations of predictive analytics in OM, such as sales and demand forecasting \citep{Carbonneau.2008, Ferreira.2016, Baardman.2018, Cui.2018, Lau.2018}, revenue management  \citep{Bernstein.2018, Feldman.2021, Chen.2022}, location selection problems \citep{Glaeser.2019, Huang.2019}, last mile delivery \citep{Liu.2020}, product recall decisions \citep{Mukherjee.2018}, inventory management \citep{Bertsimas.2019}, procurement under demand uncertainty \citep{Ban.2019}, and the estimation of the remaining useful life of products \citep{Mazhar.2007}. 

In manufacturing operations, \citet{Senoner.2022} adapted predictive analytics to improve process quality. However, despite their relevance, applications of predictive analytics and decision-making are still scarce in manufacturing. OM scholars have therefore argued that more attention in this \textquote{under-researched} area is needed \citep{Feng.2018}. This particularly holds true for manufacturing settings, where products are manufactured with high variety and in low volumes. These characteristics make it particularly challenging to apply conventional prediction methods due to distributional shifts. Here, our paper contributes to the OM literature by tailoring predictive analytics for manufacturing settings with high degrees of product customization. 

Applying predictive analytics to data-scarce settings (\eg, new products) is a known challenge in OM. \citet{Kesavan.2020} compare analytics and expert decision-making in a field experiment, finding that giving discretionary power to experts is beneficial in growth-stage products. Instead of experts, practitioners can also revert to social media \citep{Cui.2018} or other proxies \citep{Bastani.2020a}. However, our work is different in that there is no such data available for forthcoming orders in customized production. As a remedy, we develop a data-driven approach based on adversarial learning and later demonstrate its operational value.

\subsection{Job shop scheduling} 
\label{sec:job_shop_scheduling}

Job shop scheduling problems consider a multitude of arriving production orders that compete for processing time on common resources \citep{Adams.1988, Wein.1992}. These orders are typically associated with an arrival date $t$, a due date $d$, and a stochastic throughput time $y$. The difference between $d$ and $t$ defines the planned leadtime $y^{(plan)}$. The problem is to schedule the individual production orders by optimizing against a time-dependent objective. A common objective is to minimize the total cost of (1)~\emph{earliness}, \ie, completing production orders  too early (before $d$) and (2)~\emph{tardiness}, \ie, completing production orders too late (after $d$) \citep[\eg,][]{Seidmann.1981, Bagchi.1994, Federgruen.1996, Atan.2016}. A simple form of the total cost function $C$ is given by
\begin{equation}
\SingleSpacedXI
\label{eqn:job_shop}
C(y, y^{(plan)}) = \zeta^{(early)}[y^{(plan)} - y]^{+} + \zeta^{(tardy)}[y - y^{(plan)}]^{+} ,
\end{equation}
where $[ \cdot ]^+$ returns the positive part of an expression and where the convex functions $\zeta^{(early)}(\cdot)$ and $\zeta^{(tardy)}(\cdot)$ measure the costs of earliness and tardiness per time unit \citep{Seidmann.1981}. As can be seen in Eq.~(\ref{eqn:job_shop}), the planning accuracy in a job shop production strongly depends on accurate time estimates for throughput times. 

The throughput time $y$ is typically estimated based on historical observations. Recent approaches have also included covariate information for the estimation of throughput times \citep{Grabenstetter.2014}. The common assumption is that data from previous customer orders are sampled from the same probability distribution as the data from forthcoming customer orders. This rarely holds in manufacturing settings with high degrees of product customization. Customization often leads to changes in the distribution of operational data, which makes it particularly challenging to provide accurate throughput time estimates. This paper seeks to predict throughput times while accounting for distributional shifts in the operational data due to product customization. 

\subsection{Distributional shifts} 
\label{sec:domain_shift}

Predictive analytics deals with the problem of inference; that is, analyzing patterns and making predictions from observational data \citep{Hastie.2009, Ghahramani.2015}. Formally, one infers an outcome $y \in Y$ based on an input $x \in X$ from a predictive model $f$  \citep{Jordan.2015}. As formalized in statistical learning theory, the performance of a predictive model depends on its ability to generalize well on out-of-sample observations. The core assumption is that both past and future observations are independently drawn from the same probability distribution over $X \times Y$. If this assumption is violated (\ie, $\mathbb{P}(X,Y) \neq \mathbb{P}(X',Y')$ between in-sample and out-of-sample observations), the performance of predictive models is likely to deteriorate for future observations. This is the case if a predictive model is deployed on data that stem from a different probability distribution than the training data. In the literature, this is referred to as \emph{distributional shift} \citep[or domain shift;][]{Kouw.2018}.

Distributional shifts have been extensively studied for the specific requirements in computer vision and computational linguistics. These studies mostly draw upon common datasets for benchmarking (\eg, handwritten digits) but do not involve operational data. Recently, there have been some technical contributions focusing on learning under distributional shifts \citep[\eg,][]{Ganin.2016, Tzeng.2017, Shen.2018, Wang.2019}. These methods can be subsumed under the term \textquote{domain adaptation} or \textquote{domain adaptive learning.} The objective of domain adaptive learning is to perform an end-to-end prediction task while simultaneously considering distributional shifts between two different domains (\eg, predicting product reviews for books based on product reviews written for movies). While domain adaptive learning has shown great potential with image and text data, its operational value in OM practice has not yet been studied. 

There are approaches for handling distributional shifts but with a clearly different objective \citep{Pan.2010, McNamara.2017,Kouw.2018}. First, there is \emph{model retraining} \citep{Cui.2018}. Here, a model is (continuously) updated using data $(x,y)$ with features and labels from the new operational setting, so that it adapts to the distributional shift in operational data. As in online learning, the latter requires continuous access to labels, which makes it suitable for make-to-stock manufacturing (\eg, fast-moving consumer goods). In contrast, such labels are not available for forthcoming orders in customized production because of which this approach is not applicable to our work. Second, there is \emph{transfer learning} via fine-tuning \citep{McNamara.2017, Kouw.2018}. Here, inferences are made between different predictive tasks (\eg, changing $y$ from fault risk to cost) or between different populations. To this end, data from the new predictive task or new population is used to update the model weights. This is effective for operational settings with proxy data \citep{Bastani.2020a}. Yet, such data are typically unavailable for forthcoming orders in customized production. In sum, the above methods require labeled data from the deployment setting; therefore, none of these methods fulfill the objective of this work.

Distributional shifts represent a key hurdle for applying predictive analytics. Yet, despite its relevance, there is a scarcity of research on its practical implications \citep{Simester.2020}. To the best of our knowledge, there is no research that suggests how to effectively mitigate distributional shifts in manufacturing. However, distributional shifts appear in all real-world operations and can lead to poor decision-making. Motivated by the general trend in manufacturing toward increased product customization \citep{Feng.2018, Olsen.2020, Choi.2021}, this paper addresses distributional shifts in operational data through the use of adversarial learning.

\subsection{Adversarial learning} 
\label{sec:adversarial_learning}

The term \textquote{adversarial learning} refers to a general technique in predictive analytics whereby a neural network is supposed to learn two adversarial objectives \citep{Goodfellow.2014}. For example, it allows one to train a neural network that has good prediction performance \emph{and} where the representation of the neural network simultaneously fulfills another constraint. Mathematically, the two adversarial objectives can be viewed as a two-player minimax game. Yet, implementing adversarial learning in practice is challenging. On the one hand, an appropriate optimization technique must be chosen to ensure convergence with state-of-the-art solvers (\eg, stochastic gradient descent), and, on the other hand, optimization of the two objectives must take place in the latent space of the neural network parameters. 

Adversarial learning has first been introduced as part of generative adversarial networks \citep[GANs;][]{Goodfellow.2014}. GANs generate synthetic data (\eg, artificially created images) that are indistinguishable from samples drawn from a real data distribution (\eg, distribution of real images). In order to achieve this, two separate two neural networks are used: a generator $G$ that generates a new synthetic sample and a discriminator $D$ that estimates if a sample stems from the original distribution (\ie, sampled as in the training data) or from the model distribution (\ie, sampled from the generator). Both networks are then trained jointly with adversarial objectives: the discriminator is trained to distinguish the two distributions, while the generator model is trained to fool the discriminator. This results in a two-player minimax game with an optimal solution in which the generator distribution is equal to the data distribution. 

Adversarial learning has also shown success in domain adaptive learning for image and text data \citep{Ganin.2016, Tzeng.2017, Shen.2018}. Here,  adversarial learning is applied to map features from different domains into a common latent space, so that the training procedure becomes an end-to-end prediction task. This is usually realized by training one neural network to achieve the best possible prediction performance, while a second adversarial network is trained to keep the feature distributions from both domains close. This idea is based on the theoretical findings of \cite{bendavid.2007, bendavid.2010}, which suggest that, for good feature representations in cross-domain transfer, a discriminator should not be able to distinguish from which domain an observation originated. 

In this paper, we tailor domain adversarial learning to OM decision-making; that is, we address distributional shifts between different orders in customized production to improve job shop scheduling. 

\section{Empirical setup at Aker} 
\label{sec:aker}

\subsection{Job shop production at Aker} 
\label{sec:ETO_manufacturing_at_Aker}

Our empirical application is carried out at Aker, headquartered in Lysaker, Norway. Aker is a leading engineering company in the energy sector, with an annual turnover of approximately USD 3.4 billion in 2021. The company covers the entire value chain, including fabrication engineering, purchasing, manufacturing, and delivery. We focus on a job shop production involving customer orders for large metal components that are supplied for the construction of oil platforms. Due to strong dependencies, delays in individual production orders can lead to substantial economic losses. Therefore, Aker puts great emphasis on order fulfillment in its component production sites. 

In our numerical experiments, we use data from the two most recent order settings: \textquote{Johan Castberg Floating Production Vessel} (setting~\A) and \textquote{Johan Sverdrup Riser Platform Modification} (setting~\B). Setting~\A involves the production of topside modules for a floating production vessel. Setting~\B involves the production of components for the modification of an offshore oil platform. As can be observed in \Cref{fig:order_images}, the two settings for which Aker supplies components differ radically. Both settings contain complex piping networks that consist of thousands of interconnected metal spools. Each of these spools corresponds to an individual production order that can have distinctive specifications in terms of material requirements, size, and shape. To avoid costly delays at the construction yard of the customer, it is crucial that the spools are available on time. 

\begin{figure}
\centering
\captionsetup[sub]{font=scriptsize}
\begin{subfigure}{.35\textwidth}

  \centering
  \caption{\raggedright\scriptsize{Setting~\A: Johan Castberg Floating Production Vessel (Credit: Equinor)}}
  \includegraphics[height=3cm]{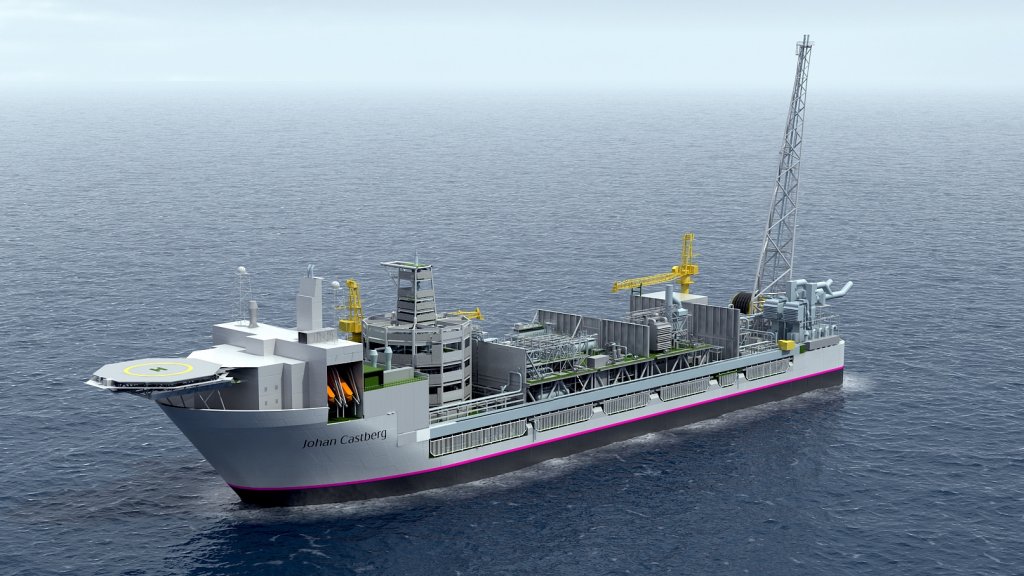}
\end{subfigure}%
\begin{subfigure}{.35\textwidth}
  \centering
  \caption{\raggedright\scriptsize{Setting~\B: Johan Sverdrup Riser Platform Modification (Credit: Anette Westgård/Equinor)}}
  \includegraphics[height=3cm]{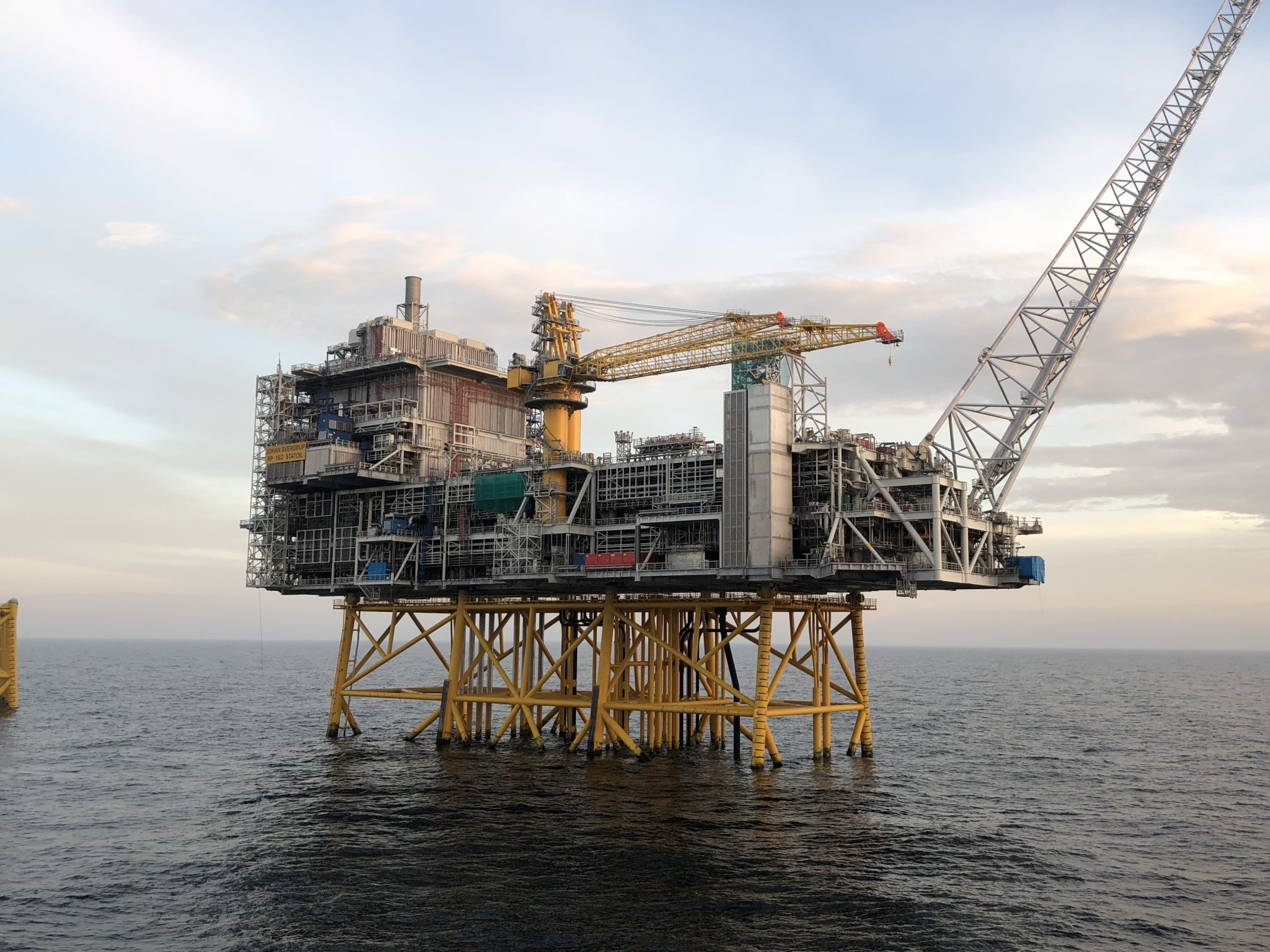}
\end{subfigure}
\begin{subfigure}{.25\textwidth}
  \centering
  \caption{\scriptsize{Spool and pipe production (Credit: Aker)}}
  \includegraphics[height=3cm]{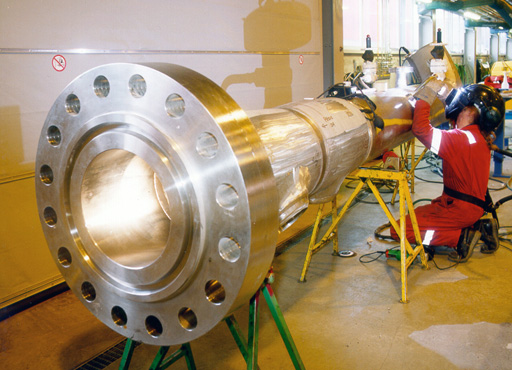}
\end{subfigure}

\vspace{0.01cm}
\caption{Empirical context at Aker.} 
\label{fig:order_images}
\end{figure}

The order due dates of all spools are recorded in a centralized enterprise resource planning (ERP) system. The production management at Aker reviews the spools for which the engineering specifications and raw materials have arrived; that is, the backlog. This provides the basis for deciding which spools should be started next. Tardiness in the spool production (\ie, the actual finish date exceeds the order due date) can cause substantial disruptions in the value chain. In contrast, spools that finish too early result in excess inventory that must be managed in a limited outdoor space (recall, the spools are large, bulky, and heavy) and may lead to rework in the event of engineering change orders. For Aker, the cost of carrying excess inventory is substantial. By accurately estimating the throughput times of orders in the backlog, production managers can optimize their scheduling decisions to improve the on-time delivery performance (see \Cref{fig:timelines}).

\begin{figure}[htb]
\centering
\begin{subfigure}{.5\textwidth}
  \centering
  \caption{\footnotesize{Early spool}}
  \tikzset{every picture/.style={line width=0.75pt}} 

\begin{tikzpicture}[x=0.60pt,y=0.6pt,yscale=-1,xscale=1]
\footnotesize

\draw   (200,180) -- (500,180) -- (500,210) -- (200,210) -- cycle ;
\draw [line width=0.75]    (200,152.5) -- (200,177) ;
\draw [shift={(200,180)}, rotate = 270] [fill={rgb, 255:red, 0; green, 0; blue, 0 }  ][line width=0.08]  [draw opacity=0] (8.93,-4.29) -- (0,0) -- (8.93,4.29) -- cycle    ;
\draw  [color={rgb, 255:red, 0; green, 0; blue, 0 }  ,draw opacity=1 ][fill={rgb, 255:red, 155; green, 155; blue, 155 }  ,fill opacity=0.45 ] (270,180) -- (420,180) -- (420,210) -- (270,210) -- cycle ;
\draw [line width=0.75]    (270,152.5) -- (270,177) ;
\draw [shift={(270,180)}, rotate = 270] [fill={rgb, 255:red, 0; green, 0; blue, 0 }  ][line width=0.08]  [draw opacity=0] (8.93,-4.29) -- (0,0) -- (8.93,4.29) -- cycle    ;
\draw [line width=0.75]    (420,152.5) -- (420,177) ;
\draw [shift={(420,180)}, rotate = 270] [fill={rgb, 255:red, 0; green, 0; blue, 0 }  ][line width=0.08]  [draw opacity=0] (8.93,-4.29) -- (0,0) -- (8.93,4.29) -- cycle    ;
\draw [line width=0.75]    (500,153) -- (500,177.5) ;
\draw [shift={(500,180.5)}, rotate = 270] [fill={rgb, 255:red, 0; green, 0; blue, 0 }  ][line width=0.08]  [draw opacity=0] (8.93,-4.29) -- (0,0) -- (8.93,4.29) -- cycle    ;

\draw (200,112) node [text width=3cm, anchor=north][align=center]{\baselineskip=10pt Arrival in \\ backlog \par};

\draw (270,112) node [text width=3cm, anchor=north][align=center]{\baselineskip=10pt Actual \\ start date \par};

\draw (420,112) node [text width=3cm, anchor=north][align=center]{\baselineskip=10pt Actual \\ finish date \par};

\draw (500,112) node [text width=3cm, anchor=north][align=center]{\baselineskip=10pt Order \\ due date \par};

\draw [|-|] (200,225) -- (270,225);
\draw (235,230) node [text width=3cm, anchor=north][align=center]{\baselineskip=10pt Backlog \par};

\draw [|<-|, >=latex] (420,225) -- (500,225);
\draw (460,230) node [text width=3cm, anchor=north][align=center]{\baselineskip=10pt Earliness \par};

\draw [|-|] (270,100) -- (420,100);
\draw (350,70) node [text width=3cm, anchor=north][align=center]{\baselineskip=10pt Throughput time \par};

\end{tikzpicture}
\end{subfigure}%
\begin{subfigure}{.5\textwidth}
  \centering
  \caption{\footnotesize{Tardy spool}}
  \tikzset{every picture/.style={line width=0.75pt}} 

\begin{tikzpicture}[x=0.60pt,y=0.6pt,yscale=-1,xscale=1]
\footnotesize

\draw   (200,180) -- (500,180) -- (500,210) -- (200,210) -- cycle ;
\draw [line width=0.75]    (200,152.5) -- (200,177) ;
\draw [shift={(200,180)}, rotate = 270] [fill={rgb, 255:red, 0; green, 0; blue, 0 }  ][line width=0.08]  [draw opacity=0] (8.93,-4.29) -- (0,0) -- (8.93,4.29) -- cycle    ;
\draw  [color={rgb, 255:red, 0; green, 0; blue, 0 }  ,draw opacity=1 ][fill={rgb, 255:red, 155; green, 155; blue, 155 }  ,fill opacity=0.45 ] (270,180) -- (500,180) -- (500,210) -- (270,210) -- cycle ;
\draw [line width=0.75]    (270,152.5) -- (270,177) ;
\draw [shift={(270,180)}, rotate = 270] [fill={rgb, 255:red, 0; green, 0; blue, 0 }  ][line width=0.08]  [draw opacity=0] (8.93,-4.29) -- (0,0) -- (8.93,4.29) -- cycle    ;
\draw [line width=0.75]    (420,152.5) -- (420,177) ;
\draw [shift={(420,180)}, rotate = 270] [fill={rgb, 255:red, 0; green, 0; blue, 0 }  ][line width=0.08]  [draw opacity=0] (8.93,-4.29) -- (0,0) -- (8.93,4.29) -- cycle    ;
\draw [line width=0.75]    (500,153) -- (500,177.5) ;
\draw [shift={(500,180.5)}, rotate = 270] [fill={rgb, 255:red, 0; green, 0; blue, 0 }  ][line width=0.08]  [draw opacity=0] (8.93,-4.29) -- (0,0) -- (8.93,4.29) -- cycle    ;

\draw [-] (420,180) -- (420,210);

\draw (200,112) node [text width=3cm, anchor=north][align=center]{\baselineskip=10pt Arrival in \\ backlog \par};

\draw (270,112) node [text width=3cm, anchor=north][align=center]{\baselineskip=10pt Actual \\ start date \par};

\draw (420,112) node [text width=3cm, anchor=north][align=center]{\baselineskip=10pt Order \\ due date \par};

\draw (500,112) node [text width=3cm, anchor=north][align=center]{\baselineskip=10pt Actual \\ finish date \par};

\draw [|-|] (200,225) -- (270,225);
\draw (235,230) node [text width=3cm, anchor=north][align=center]{\baselineskip=10pt Backlog \par};

\draw [|->|, >=latex] (420,225) -- (500,225);
\draw (460,230) node [text width=3cm, anchor=north][align=center]{\baselineskip=10pt Tardiness \par};

\draw [|-|] (270,100) -- (500,100);
\draw (385,70) node [text width=3cm, anchor=north][align=center]{\baselineskip=10pt Throughput time \par};

\end{tikzpicture}
\end{subfigure}
\caption{Example timelines for earliness and tardiness in spool production.}
\label{fig:timelines}
\end{figure}
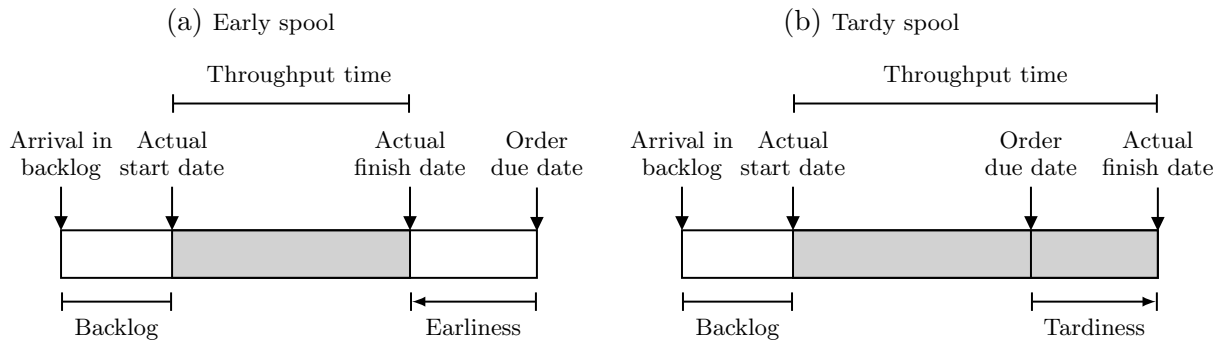

To meet order fulfillment targets, Aker follows a two-staged, predict-then-optimize approach: In the first stage, Aker estimates the throughput times for customer orders, and, in the second stage, production managers solve a job shop scheduling problem to reduce costs from early and tardy production orders (see \Cref{sec:job_shop_scheduling} for an overview on job shop scheduling). However, because each order setting (\ie, construction of oil platform) served by Aker is unique, there is substantial heterogeneity between different customer orders and thus distributional shifts in the operational data. Therefore, a na{\"i}ve application of machine learning to predict throughput times may lead to a suboptimal prediction performance and eventually suboptimal scheduling decisions. This motivates our data-driven approach to address distributional shifts between order settings with adversarial learning.

\subsection{Empirical task} 
\label{sec:prediction_task}

The task is to solve a job shop scheduling problem that optimizes decision-making such that the cost of deviations from the order due dates is minimized. Specifically, we aim to schedule orders for setting~\B (``Johan Sverdrup Riser Platform Modification'') while making use of historical data from setting~\A (``Johan Castberg Floating Production Vessel''). 

Aker provided us with operational data from the two order settings. The data for setting~\A comprise the production details of $n = 5,830$ spools that were produced between January 2019 and September 2019. The data for setting~\B comprise the production details of $m = 3,866$ spools that were produced between September 2019 and April 2020. Notably, there is no chronological overlap between the operational data from setting~\A and setting~\B (see \Cref{fig:order_timelines}). 

For job shop scheduling, we consider all information available to Aker at the time of scheduling the orders for the forthcoming setting ~\B; that is, we use historical data from setting~\A to schedule the orders belonging to setting~\B. Hence, for both settings, we have access to spool-specific features. We denote these features by $\{(x^\mathcal{A}_i)\}_{i=1}^n$ and $\{(x^\mathcal{B}_i)\}_{i=1}^m$, respectively. For the historical setting~\A, we additionally have information on the actual order throughput times that were observed in the past. We refer to them as ``labels'' and denote them by $\{(y^\mathcal{A}_i)\}_{i=1}^n$. In contrast, for setting~\B, we do not have such labels with information on order throughput times, because this is the forthcoming order setting for which we predict throughput times and schedule individual orders. In the following, we make use of the order throughput times for setting~\B but only for the purpose of evaluation.

\begin{figure}
\centering
\begin{tikzpicture}
\footnotesize

\definecolor{color0}{rgb}{1,0.498039215686275,0.0549019607843137}
\definecolor{color1}{rgb}{0.12156862745098,0.466666666666667,0.705882352941177}

\begin{axis}[
width=0.75\textwidth,
height = 3.25 cm,
tick align=outside,
tick pos=left,
x grid style={white!69.0196078431373!black},
xmin=1, xmax=17,
xtick style={color=black},
y grid style={white!69.0196078431373!black},
ymin=-0.25, ymax=2,
xtick ={2,4,6,8,10,12,14,16},
xticklabels={Feb/19, Apr/19, Jun/19, Aug/19, Oct/19, Dec/19, Feb/20, Apr/20},
ytick ={0,1},
yticklabels={Setting $\mathcal{B}$, Setting $\mathcal{A}$},
ytick style={color=black}
]
\addplot [line width=10pt, color1]
table {%
1.8 1
9.5 1
};
\addplot [line width=10pt, color0]
table {%
9.5 0
16.6 0
};
\addplot [|-|,line width=0.75pt, black]
table {%
1.8 1.425
9.5 1.425
};

\addplot [|-|, line width=0.75pt, black]
table {%
9.5 0.425
16.6 0.425
};

\draw (45,195) node [text width=4cm][align=center]{\baselineskip=10pt Model estimation $(x^\mathcal{A},y^\mathcal{A})$ \par};

\draw (120,95) node [text width=4cm][align=center]{\baselineskip=10pt Model deployment $(x^\mathcal{B})$ \par};

\end{axis}

\end{tikzpicture}
\caption{Spool production timelines of order settings \A and \B.}
\label{fig:order_timelines}
\end{figure}
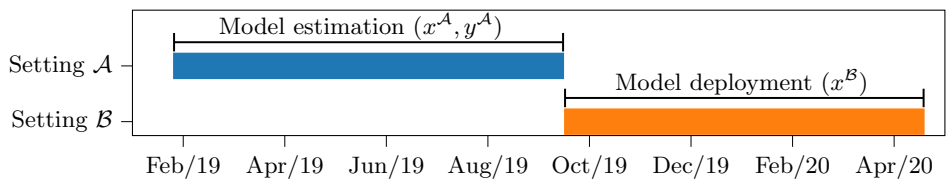

The observed throughput times from setting~\A and setting~\B are shown in \Cref{fig:target_distributions}, suggesting considerable differences. The average throughput time in setting~\A amounts to 32.0 days, whereas the average throughput time in setting~\B amounts to 35.2 days. A Welch's $t$-test confirms that the differences in throughput times are statistically significant ($p<0.001$). Recall again that the throughput times from setting~\B are unknown at the time of prediction (\ie, when starting to produce spools for setting~\B) and are only used for evaluation.

\begin{figure}[htbp]
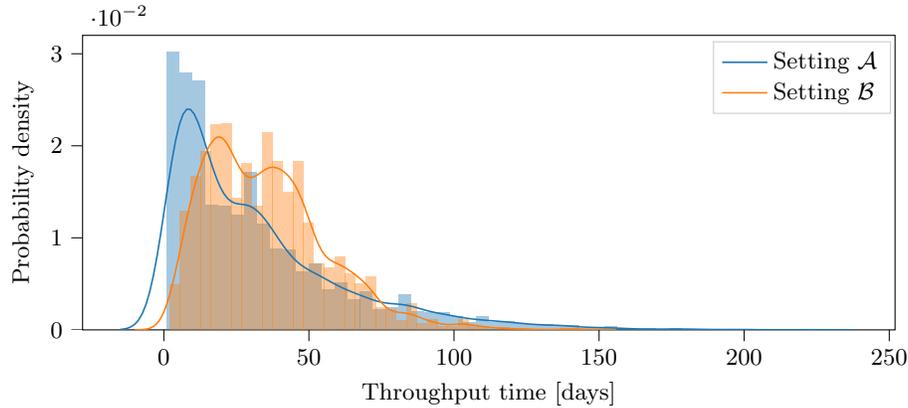

\centering
\include{Figures/target_distribution}
\caption{Distributions of throughput times for setting~\A and setting~\B.}
\label{fig:target_distributions}
\end{figure}

The data from both order settings contain $d=20$ spool-specific features, which we use to predict the throughput times (\Cref{tab:features}). The features include detailed information about the spool-specific timelines, material specifications, and required processing steps. Note that Aker computes the features daily with the available information before and until the time point when the prediction is made. Because of that, a look-ahead bias is prevented. 

All features are potentially subject to distributional shifts between the two order settings. For example, one setting may require spools with particularly fine tolerances that were not needed in previous settings. There may also be distributional shifts when the interrelations between features change. This can, for example, be the case if one order setting requires long and thin spools while the other requires short and thick spools.

\begin{table}[htbp]
\renewcommand*{\arraystretch}{1.15}
\tiny
\centering
\caption{List of features.} 
\label{tab:features}
\begin{tabular}{c p{4.5cm} c p{7.5cm}}
\toprule
\textbf{Feature} & \textbf{Description} & \textbf{Feature} & \textbf{Description}\\
\midrule

$x_{1}$ & Material multiplication factor &  $x_{11}$  & Demolition multiplication factor \\
$x_{2}$ & Insulation thickness around the spool &  $x_{12}$  & Min. design temperature of the fluid/gas in the spool \\
$x_{3}$ & Length of the spool &  $x_{13}$  & Test pressure of the fluid/gas in the spool during testing \\
$x_{4}$ & Dry weight of the spool &  $x_{14}$  & Operational pressure of the gas/fluid in the spool \\
$x_{5}$ & Summed length of all welds of the spool &  $x_{15}$  &  Max. operational temperature of the fluid/gas in the spool \\
$x_{6}$ & Number of welds &  $x_{16}$  & Max. design temperature of the fluid/gas in the spool \\
$x_{7}$ & Average weld diameter &  $x_{17}$  & Number of different materials needed to produce the spool \\
$x_{8}$ & Average weld thickness &  $x_{18}$  & Number of different tasks to produce the spool \\
$x_{9}$ & Planned length of the job &  $x_{19}$  & Max. average historical delay of the material deliveries for the spool \\
$x_{10}$  & Planned sum of workline hours & $x_{20}$ & Revision multiplication factor  \\
\bottomrule
\end{tabular}
\end{table}

\subsection{Exploratory analysis of distributional shifts} 
\label{sec:analysis_of_distributional_shift}

We now explore the distributional shifts between the spool-specific features of setting~\A and setting~\B. Because our operational data comprise $d = 20$ features, we focus on the following multivariate methods for assessing the distributional shifts: (1)~$t$-distributed stochastic neighbor embedding ($t$-SNE) and (2)~adversarial validation. The results are summarized below.

First, we apply $t$-SNE \citep{vanderMaaten.2008} to investigate (dis-)similarities in the feature distributions. The $t$-SNE method is a nonlinear dimensionality reduction technique that is specifically designed for visualizing high-dimensional data. The idea behind the $t$-SNE method is to convert the similarities between data points into joint probabilities and to minimize the Kullback--Leibler divergence between the joint probabilities of a low-dimensional embedding and the original feature space \citep{vanderMaaten.2008}. This yields a low-dimensional representation that can be visualized. 

We utilize $t$-SNE to assess whether the spool-specific features of setting~\A and setting~\B are distributed similarly. \Cref{fig:tsne} shows the feature representations for both settings in a two-dimensional space. Note that the axes do not have a specific meaning but only give an intuition about how the spool observations are distributed in the original feature space (\ie, $d = 20$). It can be observed that the feature representations form largely disjunct clusters with little overlap between the two settings. This provides strong evidence that the operational specifications of setting~\A and setting~\B are substantially different.

\begin{figure}[htbp]
\centering
\includegraphics[width=0.60\linewidth]{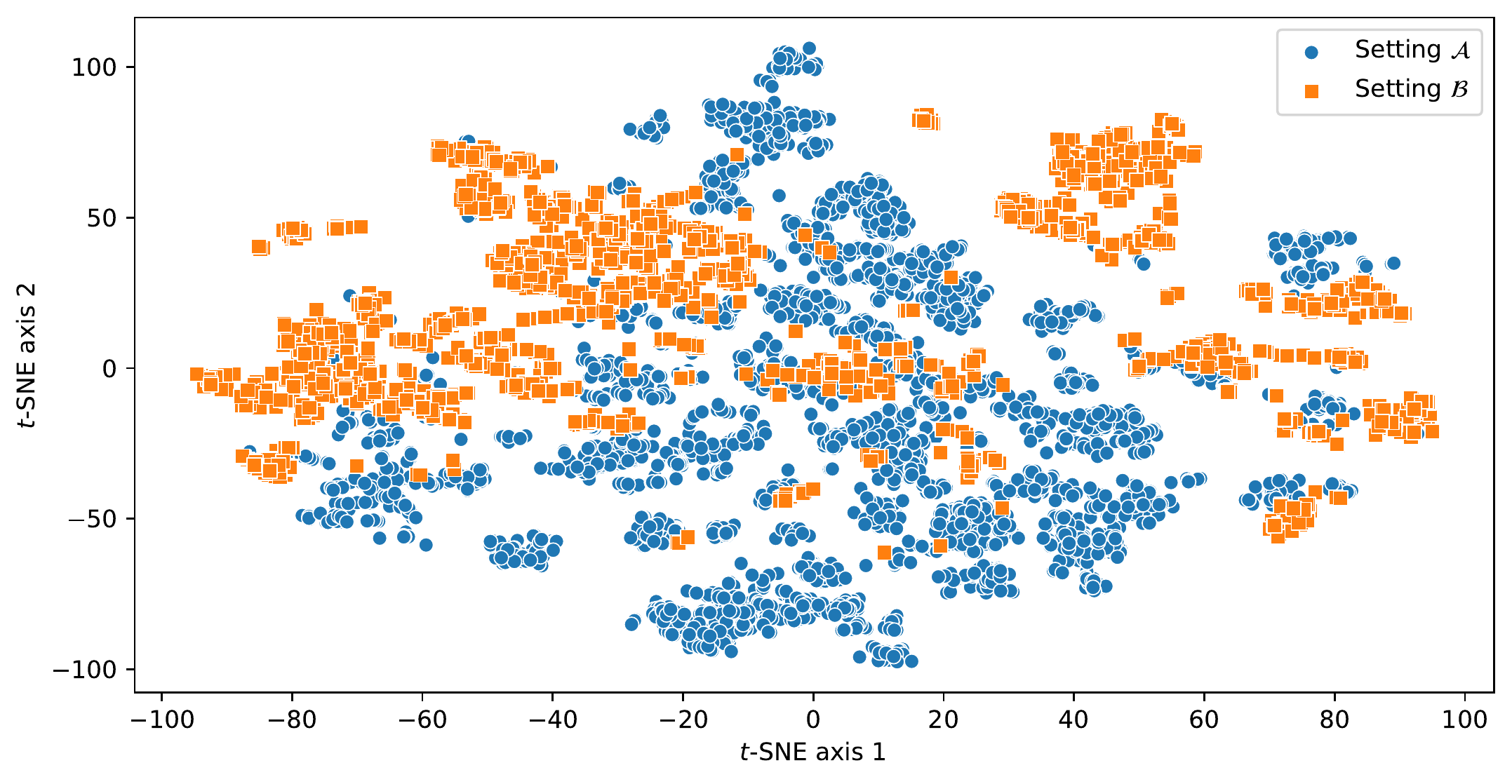}
\caption{Two-dimensional representation of the spool-specific feature spaces based on $t$-SNE.}
\label{fig:tsne}
\end{figure}

Second, we are interested in which features are particularly important in explaining the distributional shifts between both order settings. To achieve this, we draw upon adversarial validation \citep{Pan.2020}. In adversarial validation, one trains a classifier that discriminates between features originating from setting~\A and setting~\B. More formally, we learn a binary classifier to distinguish whether a feature $x$ is drawn from $\mathcal{A}^X$ or $\mathcal{B}^X$. The labels are given by binary indicators that suggest the setting from which $x$ was sampled. Provided there is no distributional shift between the two settings, a classifier should not be able to discriminate between features; that is, it should not perform better than a random guess. This would correspond to an area under the receiver operating characteristic curve (\mbox{ROC-AUC}) close or equal to 0.5. In the event of a distributional shift, the ROC-AUC would be significantly above 0.5. In this case, an analysis of feature importance can help to identify which features explain distributional shifts.

We implement adversarial validation via gradient boosting with decision trees \citep{Ke.2017}. We run the analysis over 100 different training and validation splits and consistently arrive at an out-of-sample ROC-AUC of 1.0. In other words, the classifier can perfectly discriminate between features originating from setting~\A and setting~\B, thereby adding further evidence of distributional shifts. We then compute the average feature importance based on the mean absolute feature attribution \citep{Lundberg.2020}. This allows us to identify the features that are the most important in explaining distributional shifts. The top five features associated with the largest distributional shift are listed in \Cref{fig:feature_importance}. The results suggest that a large portion of the distributional shifts can be explained by differences in the production conditions and material specifications (\ie, minimum design temperature, required insulation thickness, test pressure, $\ldots$).

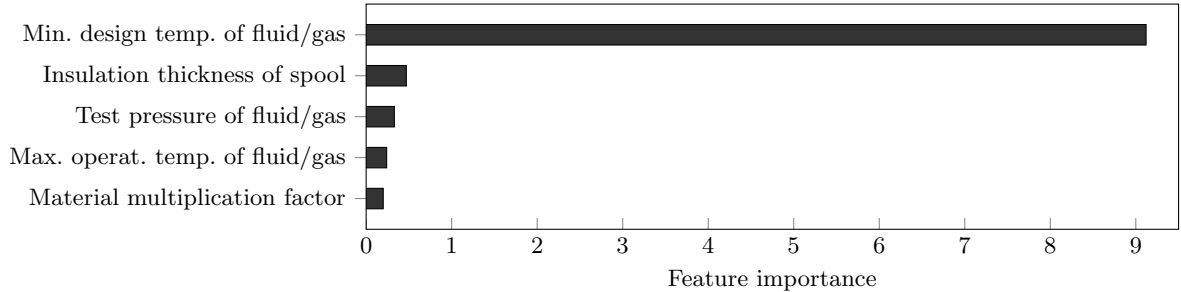
\begin{figure}
\centering
\footnotesize
\pgfplotstableread[header=false, col sep=comma]{ 
9.12, Min. design temp. of fluid/gas
0.47, Insulation thickness of spool
0.33, Test pressure of fluid/gas
0.24, Max. operat. temp. of fluid/gas 
0.20, Material multiplication factor
}\datatable

\begin{tikzpicture}
  \begin{axis}[
    ylabel near ticks,
    xlabel=Feature importance,
    xtick pos=bottom,
    ytick pos=left,
    xmin=0, 
    xmax=9.5,
    width=0.75\textwidth,
    xbar,
    bar width=2ex, y=4ex,
    enlarge y limits={abs=0.75},
    ytick={data}, 
    yticklabels from table={\datatable}{1}
  ]
  \addplot[black!20!black,fill=black!80!white] table [
    y expr=-\coordindex,
    x index=0 
  ] {\datatable};
  \end{axis}
\end{tikzpicture}
\caption{Features associated with the largest distributional shift.}
\label{fig:feature_importance}
\end{figure}

Overall, our exploratory analysis confirms that there are substantial distributional shifts between the two order settings. This can have a substantial effect on predictive models when training on data from setting~\A and predicting the operational outcomes (throughput times) for setting~\B. As we show in the following, the straightforward use of predictive analytics can harm prediction performance and thus scheduling decisions. 


\section[Model development]{Model development\footnote{Both codes and data are available via a public repository: \url{https://github.com/mkuzma96/CustomProd}.}}
\label{sec:methods}

In this section, we introduce our data-driven approach based on adversarial learning and job shop scheduling to support manufacturing operations in highly customized production. We first give a problem description (\Cref{sec:problem_statement}) and connect it with the concept of distributional shifts (\Cref{sec:problem_distributional_shifts}). Then, we adapt adversarial learning to address the distributional shifts in our decision problem (\Cref{sec:wdgrl}). 


\subsection{Problem description} 
\label{sec:problem_statement}

We consider a job shop production, where production orders are scheduled such that the expected costs of earliness (\ie, completing the order too early) and tardiness (\ie, completing the order too late) are minimized. We assume that each order $i=1, \ldots, m$ has a due date $d_i$ and a stochastic throughput time $y_i$. The actual throughput time $y_i$ is unknown a~priori and must thus be predicted from historical observations. We further assume that the to-be-completed production orders come from a forthcoming setting (called setting~\B) and we have access to labeled data from a historical setting (called setting~\A). Due to customized production, settings \A and \B are different. Formally, setting~\A involves $n$ individual production orders, where every order $i$ is described by order-specific features $x_i^\mathcal{A} \in \mathbb{R}^d$. Without loss of generality, \A can also be a set of multiple historical order settings. Further, let each individual order be associated with an observed order throughput time $y_i^\mathcal{A} \in \mathbb{R}$. The planned and forthcoming setting~\B comprises $m$ to-be-completed orders for which we would like to predict future throughput times. When starting to produce for order setting~\B, we have access to order-specific features $x_i^\mathcal{B} \in \mathbb{R}^d$ but not to the throughput times $y_i^\mathcal{B} \in \mathbb{R}$ because they lie ahead of time.

To optimize the job shop production, manufacturers typically follow a predict-then-optimize approach: in the first stage, the throughput times for customer orders are predicted, and, in the second stage, a job shop scheduling problem is solved in order to optimize scheduling decisions. We formalize this in the following.

\vspace{0.2cm}
\textbf{\emph{Stage~1.}} In the first stage, we estimate a predictive model $f: {X} \rightarrow {Y}$ to predict the throughput times $\hat{y}_i^\mathcal{B}$ for orders $i = 1, \ldots, m$ from the forthcoming setting~\B. Here, we can make use of order-specific features $x_i^\mathcal{B}$ that characterize the individual production orders (\eg, material specifications, process configurations, timeline information). The predicted throughput times for setting \B are given by $\hat{y}^\mathcal{B} = f(x^\mathcal{B})$. 

The input for estimating the predictive model $f$ is as follows. For the historical setting~\A, we have access to order-specific features $\{(x^\mathcal{A}_i)\}_{i=1}^n$ and the ground-truth throughput times $\{(y^\mathcal{A}_i)\}_{i=1}^n$. For the forthcoming setting~\B, we only have access to the order-specific features $\{(x^\mathcal{B}_i)\}_{i=1}^m$ but not the throughput times as these lie ahead of time. That is, we have a labeled dataset sampled from setting~\A and an unlabeled dataset sampled from the marginal distribution of setting~\B over $X$. Hence, the input is given by 
\begin{equation}
\{(x^\mathcal{A}_i, y^\mathcal{A}_i)\}_{i=1}^n \sim_{\text{iid}} (\mathcal{A})^n \qquad\text{and}\qquad \{(x^\mathcal{B}_i)\}_{i=1}^m \sim_{\text{iid}} (\mathcal{B}^X)^m .
\end{equation}

To estimate $f$, we aim at minimizing the expected error $\epsilon_{\mathcal{B}}$ between the predicted throughput times and observed throughput times for setting~\B. This yields the following objective
\begin{equation}
    \min_f \epsilon_\mathcal{B}(f) = \min_f \mathbb{E}_{(x^\mathcal{B},y^\mathcal{B})\sim \mathcal{B}}\left[\mathcal{L}(y^\mathcal{B},\hat{y}^\mathcal{B})\right] 
    \quad \text{given data } \{(x^\mathcal{A}_i, y^\mathcal{A}_i)\}_{i=1}^n \text{ and } \{(x^\mathcal{B}_i)\}_{i=1}^m 
    ,
    \label{eqn:target_error}
\end{equation}
where $\mathcal{L}$ denotes a convex loss function.\footnote{In this paper, we use the mean absolute error (MAE) for measuring prediction performance. This has the benefit that both prediction errors and scheduling costs are measured using the L1-norm; therefore, the costs in stages 1 and 2 are aligned. In principle, other loss functions such as the mean squared error (MSE) could also be used. However, using the MSE is not aligned with the cost penalty from the optimization stage. This is best seen in the following example: a predictive model with a small number of very large errors would have a large MSE but may have a small scheduling cost compared to a predictive with a large number of small errors.} Here, Eq.~(\ref{eqn:target_error}) is important: we aim at minimizing the expected error for setting~\B while having access to order-specific features for both setting~\A and setting~\B but only observed outcomes for setting~\A. We later adapt adversarial learning to make such predictions of throughput times while accounting for the heterogeneity between the order-specific features of setting~\A and setting~\B (see \Cref{sec:wdgrl}). In contrast to that, off-the-shelf machine learning would only minimize the expected error for setting \A without considering the distributional shift between the order-specific features of setting~\A and setting~\B.

\vspace{0.2cm}
\textbf{\emph{Stage~2.}} In the second stage, the scheduling task for the forthcoming setting~\B is formalized as an integer linear programming problem that determines a production sequence of the $m$ to-be-completed production orders given $T$ available time slots. Formally, let $\hat{y}_{i}$ denote the estimated order throughput time and $d_{i}$ the due date of a given production order $i$. The per time unit costs of earliness and tardiness are given by $c^{(early)}, c^{(tardy)} \in\mathbb{R}^{\geq 0}$, respectively. Further, let $K_{t}$ define the total number of production orders (\ie, capacity) that can be processed at time $t$. Then, the optimal production sequence can be solved via
\begin{subequations}
\begin{align}
\label{eqn:objective1}
\SingleSpacedXI
& 
\min\limits_{z_{it} \in \{0,1\}} \; \sum^{T}_{t=1} \sum^{m}_{i=1} z_{it} \, \Big[ c^{(early)}\, \max \big\{0,(d_{i}-t) - \hat{y}_{i} \big\} + c^{(tardy)}\, \max \big\{0,\hat{y}_{i}-(d_{i}-t) \big\}\Big]  \\
\text{s.t.} \quad & \qquad\qquad\qquad \psi(z, \hat{y}, t) \leq K_{t},  \qquad\text{for } t = 1,\ldots,T, \\
\label{eqn:objective3}
& \qquad\qquad\qquad \sum\limits_{t=1}^{T} z_{it} = 1, \qquad \;\;\;\;\;\;\;\; \text{for } i = 1,\ldots,m,
\end{align}
\end{subequations}
where the binary decision variable $z_{it}$ determines whether production order $i$ should be started at time $t$ and where the function $\psi(z, \hat{y}, t)$ counts the number of production orders that are produced in parallel. Here, the first constraint ensures that the available capacity is not exceeded, while the second constraint ensures that all production orders are fulfilled. We solve the optimization problem using the branch-and-cut implementation for mixed-integer problems from the GNU Linear Programming Kit (GLPK). Importantly, the scheduling task makes use of the throughput times $\hat{y}_i$, which are not given ex~ante but---analogous to OM practice---must be predicted a~priori before scheduling. 

The above problem is formulated as a predict-then-optimize approach due to three important practical benefits: (1)~It follows the current practice in order fulfillment. For example, a predict-then-optimize approach is consistent with decision-making at our case company Aker and other manufacturing firms. (2)~It offers great flexibility with regard to the chosen machine learning model. In particular, it allows manufacturers to use existing machine learning tools from their company. (3)~It allows manufacturers to incorporate expert knowledge. For example, manufacturers can assess the accuracy of the predictions before proceeding to the scheduling stage. 

Crucial to the above approach are accurate predictions of throughput times in the first stage. The reason is that incorrect predictions will lead to suboptimal production schedules and therefore additional costs. This can be formally seen in Eq.~(\ref{eqn:objective1}), where inaccurate predictions of throughput times  $\hat{y}_{i}$ negatively affect the overall production schedule. Hence, by accurately predicting throughput times (\emph{stage~1}), manufacturers can find optimal scheduling sequences (\emph{stage~2}), such that the cumulative costs of earliness and tardiness are minimized. However, predicting throughput times is particularly challenging when manufacturers produce highly customized products with non-standard specifications \citep{Cohen.2003}. Such customized products are characterized by large between-order heterogeneity and thus data samples that are not identically distributed, which, in turn, violates a standard assumption of machine learning \citep[cf.][]{Hastie.2009}. Hence, off-the-shelf machine learning models (\eg, standard deep neural networks) may give poor predictions of throughput times. The reason is that different specifications in customized production introduce distributional shifts between different customer order settings. This motivates an approach that accurately predicts order throughput times while accounting for distributional shifts between different order settings.

\subsection{Distributional shifts between order settings}
\label{sec:problem_distributional_shifts}

We now connect the heterogeneity among order settings to the concept of distributional shifts and thereby motivate the use of adversarial learning to give better predictions in stage~1. Recall that we consider different order settings where we denote the historical order setting by \A (in our empirical context, 5,830 component orders for a floating production vessel) and the forthcoming order setting by \B (in our empirical context, 3,866 component orders for an offshore oil platform). Here, we specifically focus on manufacturing settings with high degrees of product customization. In this particular context, forthcoming orders from new customers can involve entirely new specifications for which there are no historical observations. Varying specifications typically lead to between-order heterogeneity. Formally, the between-order heterogeneity is expressed by a distribution over $X\times Y$ that changes between model estimation (setting~\A) and model deployment (setting~\B). This is stated in the following definition of \emph{a distributional shift}.

\vspace{0.2cm}
\begin{definition}[Distributional shift] \emph{A distributional shift is a change in the joint probability distribution between the data from setting~\A that is used for model estimation and the data from setting~\B that is used during model deployment, \ie,}
\begin{equation}
\mathbb{P}_\mathcal{A}(X,Y) \neq \mathbb{P}_\mathcal{B}(X,Y).
\end{equation}
\end{definition}
\vspace{0.2cm}

\noindent
Building upon the concept of distributional shifts, we now explain why a na{\"i}ve application of off-the-shelf machine learning does not solve our task, and thereby we motivate the use of adversarial learning. In particular, we consider a specific case of distributional shift where the difference in the joint distribution  $\mathbb{P}(X,Y)$ between setting \A and setting \B results from a difference in the marginal distribution of $X$ (\ie, $\mathbb{P}(X)$), whereas the conditional distribution of $Y$ given $X$ (\ie, $\mathbb{P}(Y \mid X)$) remains unchanged between the two settings. Formally, we address a distributional shift of the type
\begin{equation}
\mathbb{P}_\mathcal{A}(X,Y) = \mathbb{P}(Y \mid X) \, \mathbb{P}_\mathcal{A}(X) \neq \mathbb{P}(Y \mid X) \, \mathbb{P}_\mathcal{B}(X) = \mathbb{P}_\mathcal{B}(X,Y),
\end{equation}
where $\mathbb{P}_\mathcal{A}(Y \mid X) = \mathbb{P}_\mathcal{B}(Y \mid X) = \mathbb{P}(Y \mid X)$, but $\mathbb{P}_\mathcal{A}(X) \neq \mathbb{P}_\mathcal{B}(X)$. This form of distributional shift is known in the literature as covariate shift \citep{Kouw.2018}. In fact, distributional shifts in form of covariate shifts are common at Aker and across OM practice. The latter essentially states that the specifications change between orders (\ie, $\mathbb{P}_\mathcal{A}(X) \neq \mathbb{P}_\mathcal{B}(X)$). For example, at Aker, one  setting may require thin and long spools while the other may require thick and short spools. The former essentially states that the process behind manufacturing products is comparable; that is, orders with identical specifications have the same throughput times regardless of whether they belong to setting \A or \B (\ie, $\mathbb{P}_\mathcal{A}(Y \mid X) = \mathbb{P}_\mathcal{B}(Y \mid X)$). For example, at Aker, thin (and long) spools will take the same time for production independent of whether the thin (and long) spool is later used in a floating production vessel or an offshore oil platform.

In a na{\"ive} application of machine learning, one would simply estimate $f$ only based on $(x^\mathcal{A},y^\mathcal{A})$. This has two key disadvantages (which later present two salient differences to our proposed approach). First, predictive models from off-the-shelf machine learning ignore the operational data from the forthcoming order, \ie, $x^{\mathcal{B}}$. However, such operational data characterizing forthcoming production orders are already available at the time of scheduling and could be used to improve the predictions and therefore the scheduling decisions. Second, predictive models from off-the-shelf machine learning optimize against $\min_f \mathbb{E}_{(x,y)\sim \mathcal{A}}\left[\mathcal{L}(y,\hat{y})\right] $ and not $\min_f \mathbb{E}_{(x,y)\sim \mathcal{B}}\left[\mathcal{L}(y,\hat{y})\right] $. That is, off-the-shelf machine learning optimizes the prediction performance for data coming from the historical probability distribution of data from setting~\A and not that of the forthcoming setting~\B. However, under a distributional shift $\mathbb{P}_\mathcal{A}(X,Y) \neq \mathbb{P}_\mathcal{B}(X,Y)$, both probability distributions are different; therefore, the optimization will not solve our objective from Eq.~(\ref{eqn:target_error}). The reason is that off-the-shelf machine learning makes the assumption of \emph{i.i.d.} sampling \citep[\cf][]{Hastie.2009} and thus that the distribution over $X\times Y$ remains unchanged between model estimation (setting~\A) and model deployment (setting~\B). This assumption does not hold in our manufacturing setting with high degrees of product customization. As a result, the performance of such off-the-shelf predictive models will deteriorate when deployed to a forthcoming setting~\B and will lead to suboptimal scheduling decisions.

\subsection{Proposed adversarial learning approach for predicting throughput times} 
\label{sec:wdgrl}

\subsubsection{Overview}

In the following, we address the objective from Eq.~(\ref{eqn:target_error}) through the use of adversarial learning. For this, we integrate the observed data $(x^\mathcal{A}$,$y^\mathcal{A})$ from setting~\A and the order-specific features $x^\mathcal{B}$ from setting~\B into the estimation of the predictive model $f$. This is referred to as an unsupervised domain adaptation problem. To provide a solution approach, we adapt adversarial learning to account for the different distributions behind \A and \B. Specifically, we take advantage of Wasserstein distance guided representation learning \citep[WDGRL;][]{Shen.2018}. WDGRL has been previously used for classification tasks in computer vision and computational linguistics but not for OM decisions. Using adversarial learning, we predict the throughput times while addressing the distributional shift and then solve the scheduling problem in Eq.~(\ref{eqn:objective1})--(\ref{eqn:objective3}) via integer optimization. Later, we confirm the effectiveness of adversarial learning over off-the-shelf machine learning for making job shop scheduling decisions under distributional shifts.

Formally, our aim is to predict throughput times under distributional shifts, that is, to achieve a low expected error in the forthcoming setting~\B, so that the scheduling decisions can be optimized. Because we have no observed outcomes for setting~\B, we cannot directly optimize the objective in Eq.~(\ref{eqn:target_error}). Nevertheless, the expected error in unsupervised domain adaptation problems can be bounded as stated in \Cref{remark:redko} \citep[adapted from][]{bendavid.2007, bendavid.2010, Redko.2017, Kouw.2018}. To state the remark, we first need a definition of the Wasserstein distance. 

\vspace{0.2cm}
\begin{definition}[Wasserstein distance]
\emph{The Wasserstein-1 (or Earth-Mover) distance between two probability distributions \A and \B is defined as,
\begin{equation}
    W(\mathcal{A},\mathcal{B}) = \inf_{\gamma \in \Pi(\mathcal{A},\mathcal{B})}\mathbb{E}_{(v,w)\sim\gamma}\left[\norm{v-w}\right],
\end{equation}
where $\Pi(\mathcal{A},\mathcal{B})$ is the set of all joint distributions $\gamma(v,w)$ with marginals \A and \B.}
\end{definition}
\vspace{0.2cm}

\noindent
Intuitively, the Wasserstein distance denotes the minimal amount of probability mass that must be transported (\eg, minimum expected transportation cost) from one distribution to the other to make them identical \citep{Arjovsky.2017}. 

\vspace{0.2cm}
\begin{remark}[\cite{Redko.2017}; Lemma 1]
\label{remark:redko}
The prediction error for setting~\B (\ie, $\epsilon_\mathcal{B}$) can be bound by the sum of the prediction error for setting~\A (\ie, $\epsilon_\mathcal{A}$) and the Wasserstein distance $W(\mathcal{A},\mathcal{B})$ between the feature distributions of settings \A and \B, \ie,
\begin{equation}
    \epsilon_\mathcal{B} \leq \epsilon_\mathcal{A} + W(\mathcal{A},\mathcal{B}) ,
\end{equation}
under some technical assumptions; see \citet{Redko.2017}.
\end{remark}
\vspace{0.2cm}

\noindent
The above remark assumes a machine learning classifier and provides the following theoretical motivation for our learning approach (see Supplement~\ref{appendix:bound} for a more detailed discussion of the error bound and the underlying technical assumptions). First, the upper bound of the prediction error depends on how well we can make predictions for setting~\A. Second, the upper bound of the prediction error should increase when the probability distributions of both settings drift apart. This motivates our adversarial learning approach where we aim to make inferences under two adversarial objectives: (1)~Our first objective is to estimate a function with a low prediction error on setting~\A. This is achieved by minimizing the loss between the actually observed outcomes and predictions from setting~\A. (2)~Our second objective accounts for the distance term $W(\mathcal{A}, \mathcal{B})$, whereby we learn latent feature representations of the order-specific features from both settings that are close to each other. More formally, we minimize the Wasserstein distance between the two feature distributions of setting~\A and setting~\B. As such, we aim for a good prediction performance in the known setting \A, but draw upon a representation that also generalizes well to operational data from the forthcoming setting~\B. The two aforementioned objectives are adversarial to each other (\eg, a close feature distribution does not imply a low error on setting~\A and vice versa). In the following, we formalize both objectives in a minimax game.

\subsubsection{Model specification}

Our adversarial learning approach is composed of three functions as follows (\Cref{fig:wdgrl_architecture}). (1)~A shared feature extractor $f_{e}: \mathbb{R}^d \rightarrow \mathbb{R}^l$ maps order-specific features from the $d$-dimensional input space $X$ of both order settings into a common latent space $\mathbb{R}^l$. This allows our approach to learn a shared representation of the latent feature distributions from both the historical and the forthcoming order setting. (2)~A regressor $f_{r}: \mathbb{R}^l \rightarrow \mathbb{R}$ outputs the prediction; that is, the throughput time given the latent features. (3)~A so-called critic $f_c: \mathbb{R}^l \rightarrow \mathbb{R}$ is used to estimate the Wasserstein distance between the latent feature distributions. We implement $f_e$, $f_r$, and $f_c$ as parameterized differentiable functions given by fully-connected linear feed-forward neural networks. Upon deployment, predictions are then made using $f_r \circ f_e$. That is, for input $x_i$, we compute the predicted throughput time via $\hat{y}_i = f_{r}(f_{e}(x_i))$.

The functions $f_e$, $f_r$, and $f_c$ are used in the two adversarial objectives as follows. The first adversarial objective ($\mathcal{L}_\text{reg}$) is to minimize expected prediction error for setting~\A and thus to learn predictions of the throughput time using data from the historical setting~\A. It involves $f_r \circ f_e$, which outputs the predictions. Formally, we can calculate the predicted outcome $\hat{y}_i$ via $f_{r}(f_{e}(x_i))$ for any $x_i$ sampled from setting~\A. The second adversarial objective ($\mathcal{L}_\text{was}$) aims to minimize the Wasserstein distance between settings \A and \B. Hence, it is based on $f_c \circ f_e$, so that the distance between the latent feature distributions of the historical and forthcoming setting is minimized.

\begin{figure}[h!]
\centering
\begin{subfigure}{0.5\textwidth}
  \centering
  \caption{\footnotesize{Model estimation}}
  \tikzset{every picture/.style={line width=0.75pt}} 

\begin{tikzpicture}[x=0.55pt,y=0.55pt,yscale=-1,xscale=1]
\footnotesize

\draw  [fill={rgb, 255:red, 155; green, 155; blue, 155 }  ,fill opacity=0.14 ] (0,33) .. controls (0,28.58) and (3.58,25) .. (8,25) -- (32,25) .. controls (36.42,25) and (40,28.58) .. (40,33) -- (40,117) .. controls (40,121.42) and (36.42,125) .. (32,125) -- (8,125) .. controls (3.58,125) and (0,121.42) .. (0,117) -- cycle ;
\draw  [fill={rgb, 255:red, 155; green, 155; blue, 155 }  ,fill opacity=0.14 ] (0,183) .. controls (0,178.58) and (3.58,175) .. (8,175) -- (32,175) .. controls (36.42,175) and (40,178.58) .. (40,183) -- (40,267) .. controls (40,271.42) and (36.42,275) .. (32,275) -- (8,275) .. controls (3.58,275) and (0,271.42) .. (0,267) -- cycle ;
\draw    (40,75) -- (67.5,75) ;
\draw [shift={(70.5,75)}, rotate = 180] [fill={rgb, 255:red, 0; green, 0; blue, 0 }  ][line width=0.08]  [draw opacity=0] (3.57,-1.72) -- (0,0) -- (3.57,1.72) -- cycle    ;
\draw  [fill={rgb, 255:red, 74; green, 144; blue, 226 }  ,fill opacity=0.4 ] (70,10) -- (85,10) -- (85,290) -- (70,290) -- cycle ;
\draw    (40,225) -- (67.5,225) ;
\draw [shift={(70.5,225)}, rotate = 180] [fill={rgb, 255:red, 0; green, 0; blue, 0 }  ][line width=0.08]  [draw opacity=0] (3.57,-1.72) -- (0,0) -- (3.57,1.72) -- cycle    ;
\draw    (85,75) -- (112.5,75) ;
\draw [shift={(115.5,75)}, rotate = 180] [fill={rgb, 255:red, 0; green, 0; blue, 0 }  ][line width=0.08]  [draw opacity=0] (3.57,-1.72) -- (0,0) -- (3.57,1.72) -- cycle    ;
\draw    (133.5,75) -- (161,75) ;
\draw [shift={(164,75)}, rotate = 180] [fill={rgb, 255:red, 0; green, 0; blue, 0 }  ][line width=0.08]  [draw opacity=0] (3.57,-1.72) -- (0,0) -- (3.57,1.72) -- cycle    ;
\draw  [fill={rgb, 255:red, 74; green, 144; blue, 226 }  ,fill opacity=0.4 ] (164,55) -- (179,55) -- (179,245) -- (164,245) -- cycle ;
\draw    (85,225.5) -- (112.5,225.5) ;
\draw [shift={(115.5,225.5)}, rotate = 180] [fill={rgb, 255:red, 0; green, 0; blue, 0 }  ][line width=0.08]  [draw opacity=0] (3.57,-1.72) -- (0,0) -- (3.57,1.72) -- cycle    ;
\draw    (133.5,225.5) -- (161,225.5) ;
\draw [shift={(164,225.5)}, rotate = 180] [fill={rgb, 255:red, 0; green, 0; blue, 0 }  ][line width=0.08]  [draw opacity=0] (3.57,-1.72) -- (0,0) -- (3.57,1.72) -- cycle    ;
\draw    (179,75) -- (226.33,75) ;
\draw [shift={(229.33,75)}, rotate = 180] [fill={rgb, 255:red, 0; green, 0; blue, 0 }  ][line width=0.08]  [draw opacity=0] (3.57,-1.72) -- (0,0) -- (3.57,1.72) -- cycle    ;
\draw  [fill={rgb, 255:red, 184; green, 233; blue, 134 }  ,fill opacity=0.55 ] (230,37.5) -- (245,37.5) -- (245,112.5) -- (230,112.5) -- cycle ;
\draw  [fill={rgb, 255:red, 226; green, 74; blue, 77 }  ,fill opacity=0.48 ] (230,187.5) -- (245,187.5) -- (245,262.5) -- (230,262.5) -- cycle ;
\draw    (179,75) -- (228.76,184.77) ;
\draw [shift={(230,187.5)}, rotate = 245.61] [fill={rgb, 255:red, 0; green, 0; blue, 0 }  ][line width=0.08]  [draw opacity=0] (3.57,-1.72) -- (0,0) -- (3.57,1.72) -- cycle    ;
\draw    (245,74.5) -- (272.5,74.5) ;
\draw [shift={(275.5,74.5)}, rotate = 180] [fill={rgb, 255:red, 0; green, 0; blue, 0 }  ][line width=0.08]  [draw opacity=0] (3.57,-1.72) -- (0,0) -- (3.57,1.72) -- cycle    ;
\draw    (293.5,74.5) -- (321,74.5) ;
\draw [shift={(324,74.5)}, rotate = 180] [fill={rgb, 255:red, 0; green, 0; blue, 0 }  ][line width=0.08]  [draw opacity=0] (3.57,-1.72) -- (0,0) -- (3.57,1.72) -- cycle    ;
\draw    (245,225) -- (272.5,225) ;
\draw [shift={(275.5,225)}, rotate = 180] [fill={rgb, 255:red, 0; green, 0; blue, 0 }  ][line width=0.08]  [draw opacity=0] (3.57,-1.72) -- (0,0) -- (3.57,1.72) -- cycle    ;
\draw    (293.5,225) -- (321,225) ;
\draw [shift={(324,225)}, rotate = 180] [fill={rgb, 255:red, 0; green, 0; blue, 0 }  ][line width=0.08]  [draw opacity=0] (3.57,-1.72) -- (0,0) -- (3.57,1.72) -- cycle    ;
\draw    (179,225.5) -- (226.33,225.5) ;
\draw [shift={(229.33,225.5)}, rotate = 180] [fill={rgb, 255:red, 0; green, 0; blue, 0 }  ][line width=0.08]  [draw opacity=0] (3.57,-1.72) -- (0,0) -- (3.57,1.72) -- cycle    ;
\draw  [fill={rgb, 255:red, 184; green, 233; blue, 134 }  ,fill opacity=0.55 ] (324.5,67) -- (339.5,67) -- (339.5,82) -- (324.5,82) -- cycle ;
\draw  [fill={rgb, 255:red, 226; green, 74; blue, 77 }  ,fill opacity=0.48 ] (324.5,217) -- (339.5,217) -- (339.5,232) -- (324.5,232) -- cycle ;
\draw  [dash pattern={on 0.84pt off 2.51pt}] (62.5,13.66) .. controls (62.5,6.12) and (68.62,0) .. (76.16,0) -- (173.84,0) .. controls (181.38,0) and (187.5,6.12) .. (187.5,13.66) -- (187.5,286.34) .. controls (187.5,293.88) and (181.38,300) .. (173.84,300) -- (76.16,300) .. controls (68.62,300) and (62.5,293.88) .. (62.5,286.34) -- cycle ;
\draw  [dash pattern={on 0.84pt off 2.51pt}] (221.5,39.84) .. controls (221.5,34.4) and (225.9,30) .. (231.34,30) -- (336.66,30) .. controls (342.1,30) and (346.5,34.4) .. (346.5,39.84) -- (346.5,110.16) .. controls (346.5,115.6) and (342.1,120) .. (336.66,120) -- (231.34,120) .. controls (225.9,120) and (221.5,115.6) .. (221.5,110.16) -- cycle ;
\draw  [dash pattern={on 0.84pt off 2.51pt}] (221.5,189.84) .. controls (221.5,184.4) and (225.9,180) .. (231.34,180) -- (336.66,180) .. controls (342.1,180) and (346.5,184.4) .. (346.5,189.84) -- (346.5,260.16) .. controls (346.5,265.6) and (342.1,270) .. (336.66,270) -- (231.34,270) .. controls (225.9,270) and (221.5,265.6) .. (221.5,260.16) -- cycle ;

\draw (1,75) node [text width=3cm, anchor=north, rotate=90][align=center]{\tiny{\baselineskip=10pt Data setting $\mathcal{A}$ \\ $(x^\mathcal{A},y^\mathcal{A}) \sim \mathcal{A}$ \par}};

\draw (1,225) node [text width=3cm, anchor=north, rotate=90][align=center]{\tiny{\baselineskip=10pt Data setting $\mathcal{B}$ \\ $(x^\mathcal{B}) \sim \mathcal{B}^X$ \par}};

\draw (350,75) node [text width=3cm, anchor=south, rotate=270][align=center]{\tiny{\baselineskip=7.5 pt Regression \\ loss $\mathcal{L}_{reg}$ \par}};

\draw (350,225) node [text width=3cm, anchor=south, rotate=270][align=center]{\tiny{\baselineskip=7.5 pt Wasserstein \\ loss $\mathcal{L}_{was}$ \par}};

\draw (125,-27.5) node [text width=3cm, anchor=north][align=center]{\baselineskip=10pt Feature extractor \par};

\draw (285,2.5) node [text width=3cm, anchor=north][align=center]{\baselineskip=10pt Regressor \par};

\draw (285,152.5) node [text width=3cm, anchor=north][align=center]{\baselineskip=10pt Critic \par};

\draw (124,77) node [anchor=center][inner sep=0.75pt]   [align=left] {\begin{minipage}[lt]{10.88pt}\setlength\topsep{0pt}
\begin{center}
...
\end{center}

\end{minipage}};
\draw (124,227.5) node [anchor=center][inner sep=0.75pt]   [align=left] {\begin{minipage}[lt]{10.88pt}\setlength\topsep{0pt}
\begin{center}
...
\end{center}

\end{minipage}};
\draw (284,76.5) node [anchor=center][inner sep=0.75pt]   [align=left] {\begin{minipage}[lt]{10.88pt}\setlength\topsep{0pt}
\begin{center}
...
\end{center}

\end{minipage}};
\draw (284,227) node [anchor=center][inner sep=0.75pt]   [align=left] {\begin{minipage}[lt]{10.88pt}\setlength\topsep{0pt}
\begin{center}
...
\end{center}

\end{minipage}};

\end{tikzpicture}
\end{subfigure}%
\begin{subfigure}{0.5\textwidth}
  \centering
  \caption{\footnotesize{Model deployment}}
  \tikzset{every picture/.style={line width=0.75pt}} 

\begin{tikzpicture}[x=0.55pt,y=0.55pt,yscale=-1,xscale=1]
\footnotesize

\draw  [fill={rgb, 255:red, 155; green, 155; blue, 155 }  ,fill opacity=0.14 ] (0,108) .. controls (0,103.58) and (3.58,100) .. (8,100) -- (32,100) .. controls (36.42,100) and (40,103.58) .. (40,108) -- (40,192) .. controls (40,196.42) and (36.42,200) .. (32,200) -- (8,200) .. controls (3.58,200) and (0,196.42) .. (0,192) -- cycle ;
\draw  [fill={rgb, 255:red, 74; green, 144; blue, 226 }  ,fill opacity=0.4 ] (70,10) -- (85,10) -- (85,290) -- (70,290) -- cycle ;
\draw    (40,150) -- (67.5,150) ;
\draw [shift={(70.5,150)}, rotate = 180] [fill={rgb, 255:red, 0; green, 0; blue, 0 }  ][line width=0.08]  [draw opacity=0] (3.57,-1.72) -- (0,0) -- (3.57,1.72) -- cycle    ;
\draw  [fill={rgb, 255:red, 74; green, 144; blue, 226 }  ,fill opacity=0.4 ] (164,55) -- (179,55) -- (179,245) -- (164,245) -- cycle ;
\draw    (85,150.5) -- (112.5,150.5) ;
\draw [shift={(115.5,150.5)}, rotate = 180] [fill={rgb, 255:red, 0; green, 0; blue, 0 }  ][line width=0.08]  [draw opacity=0] (3.57,-1.72) -- (0,0) -- (3.57,1.72) -- cycle    ;
\draw    (133.5,150.5) -- (161,150.5) ;
\draw [shift={(164,150.5)}, rotate = 180] [fill={rgb, 255:red, 0; green, 0; blue, 0 }  ][line width=0.08]  [draw opacity=0] (3.57,-1.72) -- (0,0) -- (3.57,1.72) -- cycle    ;
\draw    (179,151) -- (226.33,151) ;
\draw [shift={(229.33,151)}, rotate = 180] [fill={rgb, 255:red, 0; green, 0; blue, 0 }  ][line width=0.08]  [draw opacity=0] (3.57,-1.72) -- (0,0) -- (3.57,1.72) -- cycle    ;
\draw  [fill={rgb, 255:red, 184; green, 233; blue, 134 }  ,fill opacity=0.55 ] (230,113.5) -- (245,113.5) -- (245,188.5) -- (230,188.5) -- cycle ;
\draw    (245,150.5) -- (272.5,150.5) ;
\draw [shift={(275.5,150.5)}, rotate = 180] [fill={rgb, 255:red, 0; green, 0; blue, 0 }  ][line width=0.08]  [draw opacity=0] (3.57,-1.72) -- (0,0) -- (3.57,1.72) -- cycle    ;
\draw    (293.5,150.5) -- (321,150.5) ;
\draw [shift={(324,150.5)}, rotate = 180] [fill={rgb, 255:red, 0; green, 0; blue, 0 }  ][line width=0.08]  [draw opacity=0] (3.57,-1.72) -- (0,0) -- (3.57,1.72) -- cycle    ;
\draw  [fill={rgb, 255:red, 184; green, 233; blue, 134 }  ,fill opacity=0.55 ] (324.5,143) -- (339.5,143) -- (339.5,158) -- (324.5,158) -- cycle ;
\draw  [dash pattern={on 0.84pt off 2.51pt}] (62.5,13.66) .. controls (62.5,6.12) and (68.62,0) .. (76.16,0) -- (173.84,0) .. controls (181.38,0) and (187.5,6.12) .. (187.5,13.66) -- (187.5,286.34) .. controls (187.5,293.88) and (181.38,300) .. (173.84,300) -- (76.16,300) .. controls (68.62,300) and (62.5,293.88) .. (62.5,286.34) -- cycle ;
\draw  [dash pattern={on 0.84pt off 2.51pt}] (221.5,115.84) .. controls (221.5,110.4) and (225.9,106) .. (231.34,106) -- (336.66,106) .. controls (342.1,106) and (346.5,110.4) .. (346.5,115.84) -- (346.5,186.16) .. controls (346.5,191.6) and (342.1,196) .. (336.66,196) -- (231.34,196) .. controls (225.9,196) and (221.5,191.6) .. (221.5,186.16) -- cycle ;

\draw (1,150) node [text width=3cm, anchor=north, rotate=90][align=center]{\tiny{\baselineskip=10pt Data setting $\mathcal{B}$ \\ $(x^\mathcal{B}) \sim \mathcal{B}^X$ \par}};

\draw (350,150) node [text width=3cm, anchor=south, rotate=270][align=center]{\tiny{\baselineskip=7.5 pt Prediction \par}};

\draw (125,-27.5) node [text width=3cm, anchor=north][align=center]{\baselineskip=10pt Feature extractor \par};

\draw (285,80) node [text width=3cm, anchor=north][align=center]{\baselineskip=10pt Regressor \par};

\draw (284,152.5) node [anchor=center][inner sep=0.75pt]   [align=left] {\begin{minipage}[lt]{10.88pt}\setlength\topsep{0pt}
\begin{center}
...
\end{center}

\end{minipage}};
\draw (124,152.5) node [anchor=center][inner sep=0.75pt]   [align=left] {\begin{minipage}[lt]{10.88pt}\setlength\topsep{0pt}
\begin{center}
...
\end{center}

\end{minipage}};

\end{tikzpicture}
\end{subfigure}
\caption{Model specification based on the feature extractor $f_e$, the regressor $f_r$, and the critic $f_c$. The use of the model depends on whether (a) parameters should be estimated or (b) predictions should be made upon deployment. }  
 
\label{fig:wdgrl_architecture}
\end{figure}
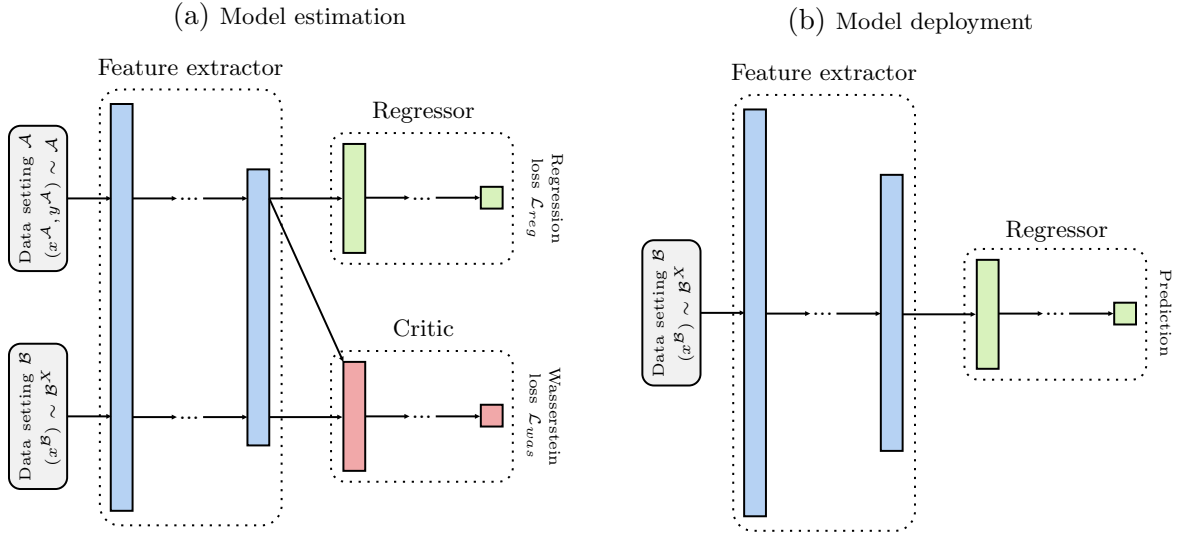

\vspace{0.2cm}
\noindent
\textbf{\emph{Adversarial objective 1 ($\mathcal{L}_\text{reg}$): Minimizing the expected prediction error for setting $\mathbf{\mathcal{A}}$}}
\vspace{0.1cm}

\noindent
The first objective is to achieve a low expected prediction error for setting~\A. For this, we compute the prediction error for labeled samples from setting~\A. The loss is given by
\begin{equation}
    \mathcal{L}_\text{reg} = \mathbb{E}_{(x,y)\sim \mathcal{A}} \,\,\mathcal{L}\left( f_{r}(f_{e}(x)),y\right) .
\end{equation}

\vspace{0.2cm}
\noindent
\textbf{\emph{Adversarial objective 2 ($\mathcal{L}_\text{was}$): Minimizing the Wasserstein distance between settings $\mathbf{\mathcal{A}}$ and $\mathbf{\mathcal{B}}$}}
\vspace{0.1cm}

\noindent
The second objective is to minimize the Wasserstein distance between settings \A and \B in the latent feature space. Recall that, intuitively, the Wasserstein distance denotes the minimal amount of probability mass that must be transported (\eg, minimum expected transportation cost) from one distribution to the other to make them identical \citep{Arjovsky.2017}. That is, we yield 
\begin{equation}
    W(\mathcal{A},\mathcal{B}) = \inf_{\gamma \in \Pi(\mathcal{A},\mathcal{B})}\mathbb{E}_{(v,w)\sim\gamma}\left[\norm{v-w}\right],
    \label{eqn:wdist_objective}
\end{equation}
where $\Pi(\mathcal{A},\mathcal{B})$ is the set of all joint distributions $\gamma(v,w)$ with marginals \A and \B. 

However, computing $W(\mathcal{A},\mathcal{B})$ directly is computationally infeasible. As a remedy, we make use of the critic function $f_c: \mathbb{R}^l \rightarrow \mathbb{R}$ \citep{Arjovsky.2017} by rewriting Eq.~(\ref{eqn:wdist_objective}) using the Kantorovich-Rubinstein duality, \ie,
\begin{equation}
    \label{eqn:wass_dist}
    \sup_{\norm{f_c}_L\leq 1} \mathbb{E}_{x \sim \mathcal{A}^X}\left[f_c(f_e(x))\right] - \mathbb{E}_{x \sim \mathcal{B}^X}\left[f_c(f_e(x))\right] ,
\end{equation}
where $f_c$ must be 1-Lipschitz. This is achieved by adding a gradient penalty loss $\mathcal{L}_\text{grad}$ \citep{Gulrajani.2017}, which penalizes the norm of the gradient itself. For this, we sample points $\hat{x}$ uniformly along straight lines between points sampled from \A and \B. The gradient penalty term can then be written as
\begin{equation}
    \mathcal{L}_\text{grad} =
    \mathbb{E}_{\hat{x}\sim\mathbb{P}_{\hat{x}}} \big[(\norm{\nabla_{\hat{x}}f_c(\hat{x})}_2-1)^2 \big].  
\end{equation}
Further details on the gradient penalty loss are provided in Supplement~\ref{appendix:model_details}.

Altogether, the loss to compute the Wasserstein distance is then given by 
\begin{equation}
\label{eqn:wloss}
    \mathcal{L}_\text{was} = \mathbb{E}_{x \sim A^X}\left[f_c(f_e(x))\right] - \mathbb{E}_{x \sim B^X}\left[f_c(f_e(x))\right]  - \beta\, \mathcal{L}_\text{grad}, 
\end{equation}
where $\beta$ is a gradient penalty weight. By maximizing the Wasserstein loss $\mathcal{L}_\text{was}$, we find the supremum in Eq.~(\ref{eqn:wass_dist}) that estimates the Wasserstein distance $W(\mathcal{A}, \mathcal{B})$ between the probability distributions of both settings. 

\subsubsection{Estimation procedure}

We now combine the two adversarial objectives given by the loss functions $\mathcal{L}_{\text{reg}}$ and $\mathcal{L}_{\text{was}}$ into a joint learning objective. Formally, this joint objective is given by the following minimax game
\begin{equation}
\label{eqn:obj}
    \min_{f_e,f_r} \Big\{ \mathcal{L}_\text{reg} + \alpha \, \max_{f_c} \mathcal{L}_\text{was} \Big\},
\end{equation} 
where $\alpha$ is a constant that weights the Wasserstein loss. Eq.~(\ref{eqn:obj}) aims at reducing the expected prediction error on the historical setting~\A through $\mathcal{L}_{\text{reg}}$, while simultaneously finding the supremum in Eq.~(\ref{eqn:wass_dist}) by maximizing $\mathcal{L}_\text{was}$ over $f_c$. The latter allows us to estimate the Wasserstein distance between the feature distributions and minimize it along with $\mathcal{L}_{\text{reg}}$ over $f_e$ and $f_r$. Note that the gradient penalty is used during maximization in order to estimate the Wasserstein distance, but the Wasserstein distance estimate itself does not contain the gradient penalty term (as seen in Eq.~(\ref{eqn:wass_dist})). Hence, the gradient penalty is not used during minimization; that is, for the $\min$ operation in Eq.~(\ref{eqn:obj}), we use the Wasserstein loss without the gradient penalty term.

In our implementation, we optimize the overall objective by alternating gradient descent following \cite{Shen.2018}. In every step, we first train the critic function to close optimality (according to the $\max$ operation in Eq.~(\ref{eqn:obj})), and then update the feature extractor and regressor by minimizing the regression loss, as well as the Wasserstein distance (\ie, the Wasserstein loss without gradient penalty) estimated by the critic. We further set the weights in the loss function, \ie, $\alpha$ from Eq.~(\ref{eqn:obj}) and $\beta$ from Eq.~(\ref{eqn:wloss}) to a default value of 1, so that we give equal weight to each part of the corresponding loss. We found that, regardless of the choice, the overall performance remains robust (see Supplement~\ref{appendix:robustness_adversarial_loss}).  

Upon model deployment, we only need the feature extractor and regressor (see \Cref{fig:wdgrl_architecture}) to make predictions. For order-specific features $x_i^\mathcal{B}$ from order setting~\B, we predict the throughput time via  
\begin{equation}
    \hat{y}^\mathcal{B}_i = f_{r}(f_{e}(x^\mathcal{B}_i)).
\end{equation}
Informed by \Cref{remark:redko}, this should then also minimize the prediction error $\epsilon_\mathcal{B}$ for the forthcoming order setting, and therefore address Eq.~(\ref{eqn:target_error}), that is, the objective of our prediction task in stage~1. Hence, our adversarial learning approach should achieve superior predictions in setting $\mathcal{B}$ compared to off-the-shelf machine learning methods that only focus on minimizing $\epsilon_\mathcal{A}$ (while ignoring operational data from setting $\mathcal{B}$). Afterward, we use the predictions to compute the corresponding optimal scheduling sequence by solving the integer optimization in Eq.~(\ref{eqn:objective1})--(\ref{eqn:objective3}), that is, the objective of our scheduling task in stage~2.

\subsection{Connecting prediction performance and scheduling costs}
\label{sec:theory}

We now provide arguments for why our proposed adversarial learning approach is effective for solving our decision problem. Previously, in \Cref{sec:problem_distributional_shifts} and \Cref{sec:wdgrl}, we motivated the use of adversarial learning to achieve better predictions compared to off-the-shelf machine learning in the presence of distributional shifts. Here, we discuss how the prediction performance translates to scheduling decisions and affects the scheduling costs. We first observe that the global minimum of scheduling cost is achieved for the oracle prediction in our decision problem, \ie, when the prediction error $\epsilon_\mathcal{B}$ is zero. Then, we use this argument to support the use of our adversarial learning approach that minimizes $\epsilon_\mathcal{B}$ over off-the-shelf machine learning which focuses only on minimizing $\epsilon_\mathcal{A}$.

When we discuss the scheduling costs in the following, we refer to the realized costs of earliness and tardiness (see \Cref{sec:job_shop_scheduling}) that result from a scheduling decision after the production is finished. For clarity, we provide an explicit definition below.
\begin{definition} 
\emph{The realized scheduling cost for a job shop scheduling problem is defined as
\begin{equation}
    \label{eqn:sch_cost}
    \bar{C}(\mathbf{z}) = \sum^{T}_{t=1} \sum^{m}_{i=1} z_{it} \, \Big[ c^{(early)}\, \max \big\{0,(d_{i}-t) - y_{i} \big\} + c^{(tardy)}\, \max \big\{0,y_{i}-(d_{i}-t) \big\}\Big],
\end{equation}
where $m$ is the number of production orders, $T$ is the number of available time slots, $z_{it} \in \{0,1\}$ are binary decision variables that determine at which time $t$ the production order $i$ started (\ie, $\sum^{T}_{t=1} z_{it} = 1$ for $i=1,\ldots,m$), $c^{(early)}$ and $ c^{(tardy)} \in\mathbb{R}^{\geq 0}$ are per time unit costs of earliness and tardiness, $d_i$ is a given due date of order $i$, and $y_i$ is the observed/realized throughput time of order $i$.}
\end{definition}

The realized scheduling cost depends on a scheduling decision $\mathbf{z} = (z_{11}, \ldots, z_{1T}, z_{21}, \ldots, z_{mT})$ which was made at time $t=0$, \ie, \emph{prior} to the beginning of the production process, whereas the realized scheduling cost is measured \emph{after} the production is finished and actual throughput times are observed. In the stage~2 of our approach, the scheduling decision $\mathbf{z}$ is calculated by solving the optimization problem in Eq.~(\ref{eqn:objective1})--(\ref{eqn:objective3}) given predictions for throughput times $\mathbf{\hat{y}} = (\hat{y}_1, \ldots, \hat{y}_m)$. Note that the realized scheduling cost function is of the same form as the cost function that we minimize in stage~2 of our approach (see Eq.~(\ref{eqn:objective1})). In fact, for the `oracle' predictions ($\mathbf{\hat{y}}_\text{oracle}$) that are equal to the true realized $\mathbf{y} $ (\ie when $\mathbf{\hat{y}} = \mathbf{\hat{y}}_\text{oracle} = \mathbf{y}$), the two cost functions are equivalent. Since a scheduling decision $\mathbf{z}$ is an argument that minimizes the cost function in Eq.~(\ref{eqn:objective1}), a decision that is calculated based on the `oracle' predictions $\mathbf{\hat{y}}_\text{oracle}$ is thereby a decision that minimizes the realized scheduling costs. In other words, the global minimum of the realized scheduling cost is achieved for a decision that is calculated based on the {`oracle'} prediction. We formalize this in Lemma~\ref{thm:min_cost} in Supplement~\ref{appendix:proof_thm}.

Therefore, perfectly accurate predictions of throughput times (with error $\epsilon_\mathcal{B}$ equal to zero) allow for optimal scheduling decisions that minimize the realized scheduling costs. However, the {`oracle'} prediction is a theoretical construct that is generally not available in OM practice. Rather, we have prediction algorithms with varying errors in prediction. Intuitively, the result in Lemma~\ref{thm:min_cost} suggests that, as we diverge from the `oracle' prediction and increase the prediction error (\ie, as we increase the MAE), the realized scheduling costs will increase correspondingly. In other words, the less accurate the predictions are (\ie, the larger the prediction MAE is), the larger the resulting scheduling costs will be. Hence, this underpins why our adversarial learning approach leads to superior decision-making compared to off-the-shelf machine learning: our approach aims to minimize $\epsilon_\mathcal{B}$, and therefore optimizes against the oracle predictions ($\mathbf{\hat{y}}_\text{oracle}$). In contrast, off-the-shelf machine learning minimizes $\epsilon_\mathcal{A}$, and therefore does \emph{not} optimize against the oracle predictions under a distributional shift. To sum up, our approach directly optimizes against the objective in Eq.~(\ref{eqn:target_error}) and thus achieves lower $\epsilon_\mathcal{B}$, which then translates into better scheduling decisions and ultimately cost savings.


\section{Numerical experiments} 
\label{sec:simulation_study}

In the following, we conduct a series of numerical experiments based on job shop scheduling to evaluate how distributional shifts in customized production affect production schedules, and therefore to study the operational value of our proposed adversarial learning approach. Further, to better understand the underlying mechanism of our approach, we further vary the operational setup across the following dimensions: (1)~the magnitude of the distributional shift, (2)~varying production line capacities, (3)~varying cost parameters, (4)~different distributions of the error term, and (5)~different nonlinearities in the operational data.

\subsection{Experimental setup}
\label{sec:exp_setup}

In the following, we set up a simulation where we vary the magnitude of distributional shifts. The simulation is designed to mimic decision-making in practice where the throughput times must be predicted before solving the job shop scheduling problem. 

We simulate data for the features $x_i^\mathcal{A}$ and $x_i^\mathcal{B}$ as follows. We first use the actual features from Aker from settings \A and \B to estimate the corresponding means, ${\mu}_{\mathcal{A}}$ and ${\mu}_{\mathcal{B}}$, and covariance matrices ${\Sigma}_{\mathcal{A}}$ and ${\Sigma}_{\mathcal{B}}$, respectively. For the historical setting \A, we then sample features from a multivariate Gaussian distribution with mean ${\mu}_{\mathcal{A}}$ and covariance matrix ${\Sigma}_{\mathcal{A}}$. This gives the samples $\{({x}^\mathcal{A}_i)\}_{i=1}^n$ with $n=5,830$ analogous to the dimension of the historical order setting $\mathcal{A}$ at Aker. For the forthcoming setting \B, we set up the sampling such that we can vary the magnitude of the distributional shift by introducing a mean shift in the direction of the difference between the means ${\mu}_{\mathcal{B}}$ and ${\mu}_{\mathcal{A}}$ (\ie, a mean shift from setting $\mathcal{A}$ towards setting $\mathcal{B}$). For this, we define the difference vector $v_{\text{diff}} = {\mu}_{\mathcal{B}} - {\mu}_{\mathcal{A}}$. Then, we sample features from a multivariate Gaussian distribution with mean ${\mu}_{\mathcal{A}} + \theta \cdot v_{\text{diff}}$ and covariance matrix ${\Sigma}_{\mathcal{B}}$, where parameter $\theta$ is used to control for the magnitude of the distributional shift (\ie, larger values for $\theta$ introduce larger distributional shifts). For a given $\theta$, this gives the samples $\{({x}^{\mathcal{B}}_i)\}_{i=1}^m$ with $m=3,866$ analogous to the dimensions of the forthcoming order setting $\mathcal{B}$ at Aker.

To simulate throughput times, we need a data-generating function $\phi : X \rightarrow Y$, so that we can generate throughput times conditional on some given features (this is needed since we have different features depending on the magnitude of the distributional shift in our simulation). We thus follow standard practice in machine learning  \citep[e.g.,][]{Shalit.2017, Yoon.2018} where so-called semi-synthetic datasets are used for modeling outcomes (in our case: throughput times). Specifically, we use predictive modeling to capture the data-generating process $\phi$ for Aker data in order to mimic the real-world setting at Aker. Here, the underlying choice of the machine learning model is crucial, because choosing some models may result in unfair advantages that bias later comparisons. For example, choosing a linear model for $\phi$ would strongly favor a linear model during evaluation. Similarly, choosing a neural network would favor our method because the structure of nonlinearities would be modeled in a similar way. Therefore, we use a nonlinear tree-based method, \ie, a random forest. By using structurally different nonlinearity to generate the data, we ensure that none of the methods has an unfair advantage later.\footnote{We later also perform a robustness check where we repeat our evaluation with a different $\phi$ and thus different nonlinearities. Specifically, we use gradient boosting but arrive at a similar conclusion: our adversarial learning approach remains consistently superior.} Formally, the throughput times are simulated by using the features via ${y} = {\phi}({x}) + \eta$, where $\phi$ is estimated using Aker data and where $\eta$ is Gaussian noise, \ie, $\eta \sim N(0, {\sigma}_{y})$, with ${\sigma}_{y}$ being the standard deviation estimate of $y$. For a given $\theta$, we thus obtain the simulated throughput times $\{({y}^\mathcal{A}_i)\}_{i=1}^n$, and  $\{({y}^\mathcal{B}_i)\}_{i=1}^m$, respectively. 

We use the following operational setup: (1)~The magnitude of the distributional shift is controlled by the parameter $\theta$. In all of our numerical experiments, we report results for $\theta = 1,2,3,4$ in order to examine how scheduling decisions are affected by different distributional shifts. (2)~The production line capacity in our main numerical experiment is set to $K_t=70$. We later also report results for $K_t = 50$ and thereby account for settings with smaller capacities such that production orders have to compete for production lines. (3)~We set the costs to $c^\text{early} = c^\text{tardy} = 1$ so that both earliness and tardiness are equally costly. We later also account for settings where overdue deliveries are more costly than finishing early (\ie, $c^\text{tardy} = 2$). (4)~We study how the distribution of the error term $\eta$ in our simulation affects scheduling costs. We thus change the Gaussian distribution to a uniform distribution with short tails. Formally, we use $\eta \sim \text{Unif}\big(-\frac{\sqrt{12} \hat{\sigma}_{y}}{2}, \frac{\sqrt{12} \hat{\sigma}_{y}}{2}\big)$ where the choice for the minimum and the maximum value is designed such that the standard deviation of $\eta$ remains equal to the standard deviation from before. (5)~We finally explore whether our results remain robust for varying nonlinearities in the operational data by changing the form of the data-generating process that we use to simulate the throughput times. Here, we repeat our numerical experiments where $\phi$ is given by gradient boosting with decision trees.  

Throughout this paper, we report (1)~the mean absolute error (MAE) for measuring prediction performance and (2)~the realized scheduling cost (as defined in Eq.~(\ref{eqn:sch_cost})) for measuring decision performance. The reason for choosing the MAE is that it allows us to measure errors in stage~1 using the L1-norm, which is thus aligned with stage~2. Note that we measure the \emph{out-of-sample} performance on setting~\B, that is, how well the approaches generalize to forthcoming order settings. Further, we account for variation in our simulation and thus report results from 10 different runs (\ie, we sample 10 different datasets using the above procedure). We later report the mean and the standard deviation. 

\subsection{Baselines}

We compare the following approaches for decision-making:
\begin{itemize}
\item \emph{Na{\"i}ve machine learning.} Here, we consider off-the-shelf machine learning methods that do not account for distributional shifts. We use two state-of-the-art machine learning methods for comparison: (1)~a \emph{regularized linear regression} (\ie, elastic net) as a linear method, and (2)~a \emph{deep neural network} as a state-of-the-art nonlinear method. Both methods are embedded in our predict-then-optimize framework and thus output scheduling decisions. This allows us to vary the prediction method in stage~1 of our decision problem, while the optimization for job shop scheduling remains identical across all methods. Hence, performance improvements must be attributed to that a method is better in addressing the objective for the predictions in stage~1.

\item \emph{Adversarial learning}. This is our proposed approach based on adversarial learning. Here, we use the same predict-then-optimize framework as for the off-the-shelf machine learning baselines. The only difference is that, through the use of adversarial learning, we now account for distributional shifts between order settings and thus directly address the objective of stage~1 in Eq.~(\ref{eqn:target_error}). For a fair comparison, we use the \emph{same} model architecture as for the deep neural network in our baselines. This is crucial: it rules out any performance gain due to the larger flexibility of the model. Instead, any performance gain must be solely attributed to the better learning objective. 
\item \emph{Oracle.} We report an oracle that has access to the ground-truth throughput times without noise. We then use the ground-truth throughput times when solving the optimization problem in Eq.~(\ref{eqn:objective1})--(\ref{eqn:objective3}). Note that the ground-truth throughput times are not available in practice. Instead, the purpose of the oracle is merely to offer a lower bound on the scheduling costs for comparability. 
\end{itemize}

\noindent

For all methods, we follow common practice \citep{Hastie.2009} and implement a rigorous hyperparameter search via 5-fold cross-validation on the data originating from setting~\A. Specifically, our hyperparameter search is an exhaustive search where all combinations are tested. Details on the hyperparameter tuning procedure can be found in Supplement~\ref{appendix:hyperparameter_tuning}. For the deep learning neural networks, we follow best-practice and use a multi-layer feed-forward neural network with ReLU activation and dropout regularization. We use the \emph{same} architecture in our adversarial learning approach. Importantly, we emphasize that our hyperparameter tuning is \emph{fair}. That is, both the deep neural network and our adversarial learning use exactly the \emph{same} tuning grid. Hence, the \emph{same} neural network architecture configurations are tested (\ie, implying that both have the {same} ``budget'' for tuning as both have similar runtimes). Hence, since all else is held equal, any improvements must solely be attributed to the fact that one of the two methods has a better objective function in stage~1. 

For the optimization, we use the branch-and-cut implementation for mixed-integer problems from the GNU Linear Programming Kit~(GLPK). All solver parameters are kept at their default values. Note that the optimization is identical for all of the above approaches. Due to computational reasons, we limit the optimization to the 100 production orders with the earliest due date.

\subsection{Results}

\subsubsection{Main result with a varying magnitude of the distributional shift}

We now report our main results (\Cref{tab:simulation_1}). Our results are reported as average values over 10 independent simulation runs ($\pm$ standard deviation). 

We make the following observations: (1)~Our adversarial learning approach consistently outperforms off-the-shelf machine learning baselines for different magnitudes of the distributional shift in terms of both prediction performance (MAE) and scheduling cost. We also performed a Welch's $t$-test that compares our approach to the better of the two off-the-shelf machine learning baselines. For $\theta = 2, 3, 4$, we find that the improvements in scheduling costs are statistically significant at the 0.1\%-significance threshold. (2)~The performance gains of our adversarial learning approach become larger when the magnitude of the distributional shift ($\theta$) is also larger. In other words, the performance of off-the-shelf machine learning quickly deteriorates as the magnitude of the distributional shift is increased, whereas our adversarial learning approach offers substantially more robust performance. For example, for $\theta = 4$, our adversarial learning approach achieves gains in the prediction performance of more than 24$\%$ over the baselines, which results in cost savings of more than 33\%. Hence, we observe that, across different magnitudes of the distributional shift, lower prediction errors of our adversarial learning lead to better scheduling decisions, ultimately resulting in lower realized scheduling costs. In sum, job shop scheduling using our adversarial learning approach is superior to job shop scheduling using off-the-shelf machine learning by a considerable margin.

\begin{table}
\renewcommand*{\arraystretch}{1.05}
\SingleSpacedXI
\footnotesize
\centering
\caption{Main results for job shop scheduling.} 
\label{tab:simulation_1}
\begin{tabular}{@{\extracolsep{4pt}} l cccc cccc}
\toprule
\textbf{Approach} & \multicolumn{4}{c}{Prediction error (MAE)} & \multicolumn{4}{c}{Scheduling cost} \\
\cline{2-5} \cline{6-9}
 & $\theta=1$ & $\theta=2$ & $\theta=3$ & $\theta=4$ & $\theta=1$ & $\theta=2$ & $\theta=3$ & $\theta=4$ \\
\midrule
Linear regression (regularized) & 24.3 & 25.3 & 28.1  & 31.5 & 2989.3 & 3355.0 & 3931.0 & 4453.9 \\
 & ($\pm$2.0) & ($\pm$1.6) & ($\pm$1.4) & ($\pm$2.1) & ($\pm$204.6) & ($\pm$250.5) & ($\pm$237.8) & ($\pm$270.9) \\
Deep neural network & 23.8 & 24.4 & 27.2 & 31.4 & 2933.1 & 3209.4 & 3870.9 & 4568.4 \\
 & ($\pm$2.0) & ($\pm$1.3) & ($\pm$1.5) & ($\pm$2.3) & ($\pm$216.8) & ($\pm$227.9) & ($\pm$310.6) & ($\pm$293.5) \\
Adversarial learning (ours) & 23.5 & 23.7 & 24.0 & 23.8 & 2782.2 & 2802.7 & 2995.2 & 2954.2 \\
 & ($\pm$1.7) & ($\pm$1.4) & ($\pm$2.1) & ($\pm$1.5) & ($\pm$182.3) & ($\pm$183.9) & ($\pm$177.7) & ($\pm$176.2) \\
\midrule
{Oracle (lower bound)} & 0.0 & 0.0 & 0.0 & 0.0 & 417.9 & 503.0 &  518.4 &  600.2 \\
 & --- & --- & --- & --- & ($\pm$58.8) & ($\pm$77.8) & ($\pm$73.9)  & ($\pm$74.7) \\
\bottomrule
\end{tabular}
\end{table}

\subsubsection{Sensitivity to different capacities and different costs}
\label{sec:c_params}

We now repeat our numerical experiments from above but vary the operational setup (see \Cref{fig:res}). First, we show the scheduling costs from the above numerical experiment for comparability (left). Second, we use a smaller production line capacity ($K_t = 50$), so that orders have to compete for production lines (center). Third, we use a different cost ratio where we account for the setting where overdue deliveries are more costly than finishing early (right). We thus set cost of tardiness to $c^\text{tardy} = 2$ and thus yield a cost ratio $\frac{c^\text{tardy}}{c^\text{early}} = 2$. All other parameters are identical to the above numerical experiments. The new numerical experiments have only an effect on the scheduling optimization in stage~2 while the predictions from stage~1 are identical to the previous numerical experiments. For that reason, we only report the scheduling costs for varying magnitudes of the distributional shift ($\theta = 1,2,3,4$).

\begin{figure}
\centering
\includegraphics[width=0.80\linewidth]{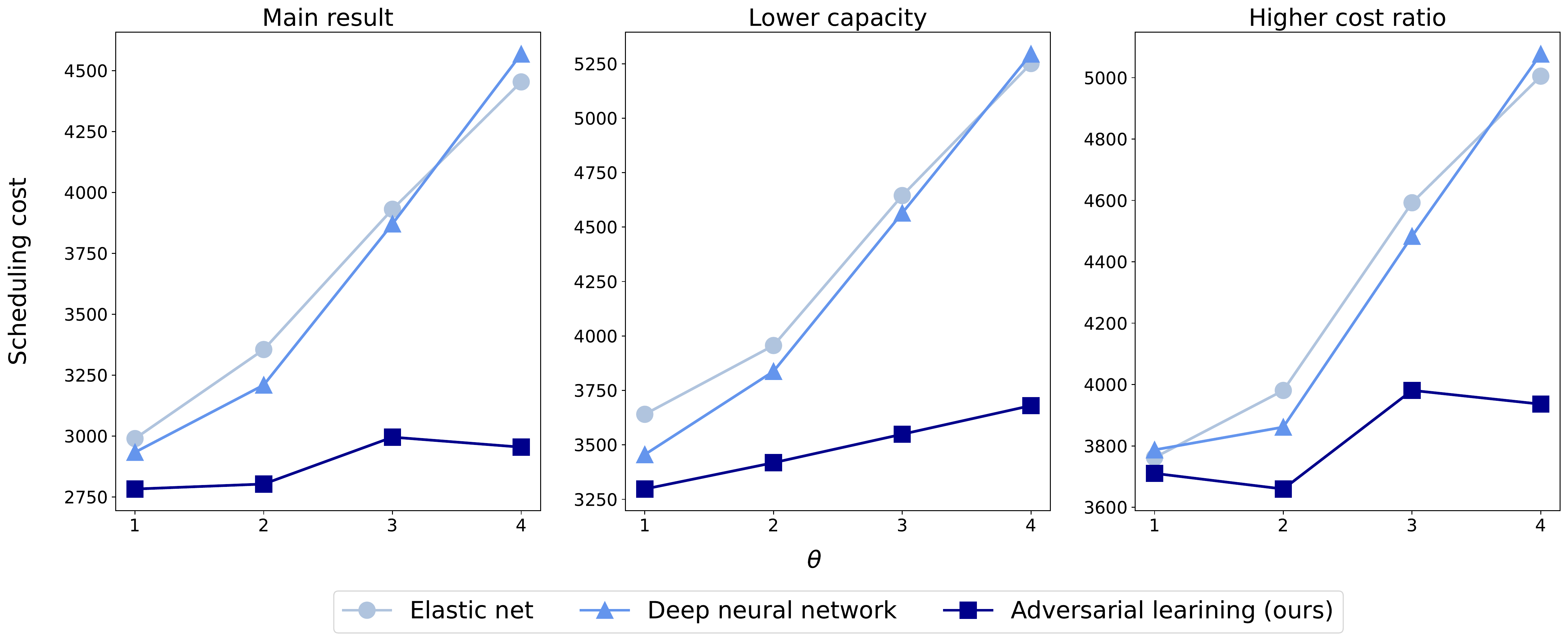}
\caption{Scheduling cost for increasing magnitude of distributional shift across different operational settings.}
\label{fig:res}
\end{figure}

Importantly, the main implications of our main numerical experiment remain unchanged (\Cref{fig:res}). (1)~Our adversarial learning approach outperforms off-the-shelf machine learning baselines. (2)~The performance gains from our adversarial learning approach increase when the magnitude of the distributional shift is large. We also note that the overall scheduling costs are larger for the numerical experiments with a lower capacity and a higher cost ratio in line with our expectations.

\subsubsection{Sensitivity to error distributions and nonlinearities}
\label{sec:error_dist}

We now repeat our main numerical experiment but vary how we generate data in our simulation. In \Cref{tab:error_dist}, we vary the distribution of the error term $\eta$ in our simulation. We now switch from a Gaussian (in our main numerical experiment) to a uniform distribution, which has shorter tails. The results confirm that our adversarial learning approach outperforms off-the-shelf machine learning baselines. We find this for different magnitudes of the distributional shift in terms of both prediction performance (MAE) and scheduling costs. In \Cref{tab:nonlin}, we repeat the simulation with a different nonlinearity in the operational data. Here, we replace $\phi$ with gradient boosting. As above, our adversarial learning approach consistently outperforms the off-the-shelf machine learning baselines for different magnitudes of the distributional shift. We further performed a  Welch's $t$-test that compares the scheduling costs from our approach to the better of the two off-the-shelf machine learning baselines. For $\theta = 2, 3, 4$, the improvements are again statistically significant at common significance thresholds.

Altogether, the results add to the robustness of our adversarial learning approach and demonstrate its operational value for job shop scheduling. Using real-world data from partner company Aker, we find consistent evidence that our adversarial learning approach leads to superior decision-making compared to off-the-shelf machine learning (\ie, current industry standard). It can therefore generate substantial cost savings.

\begin{table}
\renewcommand*{\arraystretch}{1.05}
\SingleSpacedXI
\footnotesize
\centering
\caption{Results for a different distribution of the error term.} 
\label{tab:error_dist}
\begin{tabular}{@{\extracolsep{4pt}} l cccc cccc}
\toprule
\textbf{Approach} & \multicolumn{4}{c}{Prediction error (MAE)} & \multicolumn{4}{c}{Scheduling cost} \\
\cline{2-5} \cline{6-9}
 & $\theta=1$ & $\theta=2$ & $\theta=3$ & $\theta=4$ & $\theta=1$ & $\theta=2$ & $\theta=3$ & $\theta=4$ \\
\midrule
{Linear regression (regularized)} & 25.3 & 27.6 & 29.3  & 33.0 & 2974.4 & 3612.1 & 3983.9 & 4700.2 \\
 & ($\pm$1.5) & ($\pm$1.6) & ($\pm$2.6) & ($\pm$2.7) & ($\pm$193.1) & ($\pm$223.0) & ($\pm$338.3) & ($\pm$349.2) \\
{Deep neural network} & 24.4 & 26.4 & 28.2 & 32.2 & 2891.2 & 3429.9 & 3865.4 & 4637.2 \\
 & ($\pm$1.5) & ($\pm$1.9) & ($\pm$2.9) & ($\pm$4.0) & ($\pm$165.7) & ($\pm$326.0) & ($\pm$422.1) & ($\pm$597.1) \\
{Adversarial learning (ours)} & 24.2 & 25.4 & 25.6 & 25.8 & 2741.0 & 3010.8 & 3020.3 & 3030.6 \\
 & ($\pm$1.3) & ($\pm$1.3) & ($\pm$1.4) & ($\pm$1.2) & ($\pm$154.7) & ($\pm$274.9) & ($\pm$250.7) & ($\pm$144.1) \\
\midrule
{Oracle (lower bound)} & 0.0 & 0.0 & 0.0 & 0.0 & 383.4 & 436.5 &  476.3 &  525.7 \\
 & --- & --- & --- & --- & ($\pm$82.5) & ($\pm$82.7) & ($\pm$85.2)  & ($\pm$72.2) \\
\bottomrule
\end{tabular}
\end{table}

\begin{table}
\renewcommand*{\arraystretch}{1.05}
\SingleSpacedXI
\footnotesize
\centering
\caption{Results for a different nonlinearity in the operational data.} 
\label{tab:nonlin}
\begin{tabular}{@{\extracolsep{4pt}} l cccc cccc}
\toprule
\textbf{Approach} & \multicolumn{4}{c}{Prediction error (MAE)} & \multicolumn{4}{c}{Scheduling cost} \\
\cline{2-5} \cline{6-9}
 & $\theta=1$ & $\theta=2$ & $\theta=3$ & $\theta=4$ & $\theta=1$ & $\theta=2$ & $\theta=3$ & $\theta=4$ \\
\midrule
{Linear regression (regularized)} & 23.9 & 24.6 & 27.1  & 30.0 & 2869.2 & 3183.2 & 3726.9 & 4065.2 \\
 & ($\pm$1.8) & ($\pm$1.3) & ($\pm$1.6) & ($\pm$1.1) & ($\pm$156.5) & ($\pm$249.3) & ($\pm$234.0) & ($\pm$192.5) \\
{Deep neural network} & 23.6 & 24.4 & 27.3 & 31.2 & 2823.9 & 3134.1 & 3737.7 & 4360.7 \\
 & ($\pm$1.9) & ($\pm$1.5) & ($\pm$2.0) & ($\pm$1.7) & ($\pm$187.7) & ($\pm$243.2) & ($\pm$283.0) & ($\pm$248.8) \\
{Adversarial learning (ours)} & 23.4 & 23.5 & 24.0 & 24.2 & 2697.5 & 2711.6 & 2815.0 & 2835.9 \\
 & ($\pm$1.7) & ($\pm$1.1) & ($\pm$1.7) & ($\pm$1.5) & ($\pm$141.8) & ($\pm$156.4) & ($\pm$162.4) & ($\pm$164.8) \\
\midrule
{Oracle (lower bound)} & 0.0 & 0.0 & 0.0 & 0.0 & 295.4 & 373.3 &  362.8 &  420.8 \\
 & --- & --- & --- & --- & ($\pm$54.2) & ($\pm$60.9) & ($\pm$58.6)  & ($\pm$55.8) \\
\bottomrule
\end{tabular}
\end{table}

\section{Robustness checks}
\label{sec:robustness_checks}

\subsection{Machine learning baselines for handling distributional shifts}
\label{sec:robustness_transfer_learning}

We consider other baselines from the machine learning literature \citep{Pan.2010, McNamara.2017, Kouw.2018} that can---in principle---also handle distributional shifts, namely model retraining and transfer learning with fine-tuning. However, we emphasize that the aforementioned baselines focus on a different setup called supervised domain adaptation (and not \underline{un}supervised domain adaptation, as in our decision problem). To this end, baselines from supervised domain adaptation require access to labels from setting \B and are thus only applicable \emph{after} the start of order setting~\B and not \emph{before}. This is a crucial difference to our adversarial learning approach, which is designed for operational contexts where labels for the forthcoming order setting~\B are absent (\ie, our approach has access to order-specific features $x_i^\mathcal{B}$ but \emph{not} to the corresponding labels $y_i^\mathcal{B}$.) In particular, for job shop scheduling, using model retraining and transfer learning would require access to some of the labels in setting \B, which means that one can perform scheduling optimization only \emph{after} production start and, hence, cannot provide an optimal scheduling sequence for all of the production orders in setting \B. 
Hence, we assume that the labels of the orders with throughput times that are within the first month of setting~\B are known, and, hence, scheduling optimization is done one month after the first spools have been produced. The results are shown in Supplement~\ref{appendix:robustness_ML_baselines}, where we see that these baselines are inferior, despite having access to more information than our approach.

\subsection{Baselines for domain adaptation}
\label{sec:robustness_adversarial_learning}

As an additional evaluation, we searched the literature for other domain adaptation baselines (see \cite{Wang.2018} for an overview). Here, another state-of-the-art baseline next to WDGRL is the so-called {gradient reversal layer} from \citet{Ganin.2016}. We find that WDGRL is more stable during training compared to a network architecture with a gradient reversal layer. This is consistent with earlier findings from the machine learning literature \citep{Shen.2018}.

\subsection{Robustness in other operational contexts}
\label{sec:robustness_reverse_order_setting}

As an additional robustness check, we consider a different operational context, that is, a different historical manufacturing project at Aker. Thereby, we show that our approach is transferable to other order settings. We again consider two order settings: Johan Sverdrup Living Quarters Rig is used for training in stage~1 of our approach, and Johan Castberg Floating Production Vessel is the order setting for which the job shop scheduling problem is solved. Both order settings do not have chronological overlap. The rest of the experimental setup is identical to \Cref{sec:exp_setup}. The results yield conclusive findings: job shop scheduling using an adversarial learning approach is superior over job shop scheduling using off-the-shelf machine learning by a considerable margin (see Supplement~\ref{appendix:robustness_other_settings}).


\section{Implications} 
\label{sec:implications}

\subsection{Methodological implications}

To meet order fulfillment targets, manufacturers typically follow a two-staged decision-making process where they first predict the throughput times of production orders and then determine an optimal production schedule. However, predicting throughput times in manufacturing settings with high degrees of product customization is challenging because of distributional shifts between customer orders. Such distributional shifts violate the standard assumption of identically distributed samples in predictive analytics \citep[cf.][]{Hastie.2009}, which can harm the prediction performance and thus lead to poor scheduling decisions. To account for distributional shifts, we developed a data-driven approach that combines adversarial learning and job shop scheduling. 

In our adversarial learning approach, we make predictions by modeling two adversarial objectives: (1)~to predict throughput times with the best possible prediction performance and (2)~to learn a neural network representation that generalizes well across order settings. Specifically, in the latter, we minimize the distance of the neural network representations of the operational data between the historical and the forthcoming order, which reduces prediction errors when applying the model to forthcoming orders with different specifications as the neural network representation is invariant to order settings. As such, the two adversarial objectives force predictions to not be biased towards historical orders but also account for the product specifications of forthcoming orders. This way, we capture distributional shifts and consequently improve decision-making in scheduling problems. 

While adversarial learning has almost exclusively been applied in computer vision and computational linguistics, this paper analyzes its operational value in an OM problem. As we have shown here, our adversarial learning approach can be effective for manufacturers that produce products with a high degree of customization, and, as such, it overcomes the limitations of existing methods in OM practice. In particular, our approach is different from conventional transfer learning \citep{McNamara.2017, Kouw.2018}, which requires access to labels of forthcoming orders (\ie, throughput times from the new customer order setting, yet these are only available after completion), whereas our approach circumvents the need for such labels.

\subsection{Managerial implications}

As the manufacturing industry is trending towards higher customization \citep{Feng.2018, Olsen.2020, Choi.2021}, decision models that can handle distributional shift gain relevance and importance. For example, Aker stressed the managerial implications of our work: their business is evolving towards more diverse products, higher volumes, and shorter lead times, which increase the distributional shifts and the relevance of our work for their operational decision-making. Hence, managers foresee a larger emphasis on addressing distributional shifts in the future. Yet, issues due to distributional shifts have received limited attention in OM research and practice. Our research contributes insights into how distributional shifts can be detected and provides operations managers with a promising approach to address them. Motivated by our work, we recommend practitioners to more carefully monitor operational data for potential distributional shifts (\eg, via adversarial validation as shown in \Cref{sec:analysis_of_distributional_shift}). For companies, this may serve as an early warning system to identify distributional shifts and inform managers when to take action.

Interestingly, our results suggest that using conventional machine learning---as common in OM practice---has an important limitation. It relies upon the assumption of identically distribution samples and, hence, cannot account for distributional shifts, which negatively affects prediction performance as well as scheduling costs. As such, OM practitioners must be aware that an off-the-shelf application of popular prediction models, such as deep neural networks, could result in poor decision-making. This has direct implications as conventional machine learning models are increasingly used for off-the-shelf predictions in OM practice \citep{Bastani.2021}. 

Our findings are also relevant beyond manufacturing. Distributional shifts are frequently observed in other areas of management \citep{Simester.2020}. In healthcare operations, for example, distributional shifts arise when predicting the mortality risk of rare or new diseases, or when applying machine learning to patient cohorts that are dissimilar from those upon training \citep{Hatt.2022}. In marketing, distributional shifts appear when making inferences about customer behavior in emerging segments. Likewise, distributional shifts may also arise for marginalized populations \citep{deArteaga.2022}.  Generally, adversarial learning has the potential to improve managerial decision-making in settings subject to extensive heterogeneity. 

\subsection{Limitations and opportunities for further research}

Our approach relies upon certain technical assumptions, which also hold for most unsupervised domain adaptation algorithms that rely on learning domain-invariant representations \citep{Kouw.2018}. First, we have assumed a specific form of distributional shift, namely a covariate shift \citep{Kouw.2018}, where the distribution of $X$ changes between the two settings $\mathcal{A}$ and $\mathcal{B}$, but the conditional distribution of $Y$ given $X$ remains constant. Covariate shifts imply that the manufacturing processes are comparable across products (that is, identical specifications lead to the same throughput time regardless of the underlying setting) and are thus common in OM practice. Second, another common assumption in unsupervised domain adaptation is that $P(X)$ has overlapping support between different domains. We have not explicitly made this assumption since the theoretical bounds that motivate our approach do not rely on overlapping support \citep{bendavid.2007, Redko.2017, Shen.2018}. However, many works in unsupervised domain adaptation stress the importance of both assumptions to guarantee successful learning (see \cite{Johansson.2019, Breitholz.2023} for more details). Overlapping support should also hold in OM practice as it implies that there is some similarity across orders. Still, decision-makers should be careful when using adversarial learning in cases where these assumptions do not hold. We call for further research into these limitations and the development of methods that are robust to them.


\subsection{Concluding remarks} 
\label{sec:conclusion}

In this paper, we showed that distributional shifts impair decision-making in operational settings. As a remedy, we proposed a data-driven approach combining adversarial learning and job shop scheduling where we address distributional shifts in customized production. Finally, we demonstrated its operational value using a series of numerical experiments based on a real-world job shop production at Aker. An important implication of our work for OM is that both practitioners and researchers need to be aware of potential risks due to distributional shifts in operational settings and, if these occur, must seek effective ways to address them.


\vspace{0.3cm}
\noindent
{\footnotesize\textbf{\textsf{Code and data availability.}} Both codes and data are available via a public repository: \url{https://github.com/mkuzma96/CustomProd}.}


\ACKNOWLEDGMENT{We thank Trond Haga, Sigmund Mongstad Hope, and Jürg Käser for our cooperation with Aker Solutions. We also acknowledge the constructive comments we received from the editors and reviewers. This research received financial support from the Norwegian Research Council (309810 COM-FLEX: Competitive Flexibility).}

\bibliographystyle{pomsref} 

 \let\oldbibliography\thebibliography
 \renewcommand{\thebibliography}[1]{%
 	\oldbibliography{#1}%
 	\baselineskip12.5pt 
 	\setlength{\itemsep}{3pt}
 }

 \SingleSpacedXI
\bibliography{references}
\OneAndAHalfSpacedXI

\clearpage

\newcommand{\hbAppendixPrefix}{A}
\renewcommand{\thefigure}{\hbAppendixPrefix\arabic{figure}}
\setcounter{figure}{0}
\renewcommand{\thetable}{\hbAppendixPrefix\arabic{table}} 
\setcounter{table}{0}
\renewcommand{\theequation}{\hbAppendixPrefix\arabic{equation}} 
\setcounter{equation}{0}
\def\thesection{\Alph{section}}
\setcounter{section}{0}

~~~~~
\vfill
\begin{center}
\Large
\textbf{Online Supplements}
\normalsize
\end{center}
\vfill


\clearpage 
\section{Theoretical bounds for adversarial learning}
\label{appendix:bound}

In \Cref{sec:wdgrl}, we discussed theoretical bounds for unsupervised domain adaptation that motivate the use of domain-invariant representation learning to minimize the distances between the feature distributions. Specifically, we rely on the Wasserstein distance metric, which we summarize in \Cref{remark:redko} (adapted from \cite{Redko.2017}). Consequently, our approach uses Wasserstein distance guided representation learning \citep{Shen.2018}. However, we acknowledge that the \Cref{remark:redko} relies on certain technical assumptions (see \cite{Redko.2017}), which may differ from our setting. We primarily refer to the fact that the theoretical bounds that motivate our approach are derived for a classification task with binary labels \citep{bendavid.2007, Redko.2017, Shen.2018}. However, \citet{Mansour.2009} have shown that such theoretical bounds can be generalized for domain adaptation problems with arbitrary loss functions, and hence domain-invariant representation learning approaches are applicable for regression tasks as well. This provides additional theoretical backing for using the proposed approach in our case of predicting throughput times in customized prediction.

Additionally, we discuss another bound based on the Wasserstein distance used by Wasserstein distance guided representation learning \citep{Shen.2018}, and provide a formal analysis showing that the proposed adversarial learning architecture provides a solution approach to our objective in Eq.~(\ref{eqn:target_error}). For this, we give an upper bound for the expected prediction error when making predictions for order setting~\B. To derive a bound, we need to adapt our setting by assuming to make predictions on whether $y$ (the throughput time measured in time units) exceeds a certain threshold $\theta$ or not. Provided the labels are written as $y' = \mathbbm{1}_{y\geq\theta}$, then the following theorem holds.

\begin{theorem}[Theorem 1 in \cite{Shen.2018}]
Let setting~\A and setting~\B have the same labeling function $f^*: X \rightarrow [0,1]$ and let $\mathcal{H}$ be a hypothesis class, that is, the set of predictor functions. For all $\forall h\in \mathcal{H}, h: X\rightarrow [0,1]$ that are $K$-Lipschitz continuous for some $K$, the prediction error $\epsilon_\mathcal{B}$ is bounded by
\begin{equation*}
    \epsilon_\mathcal{B}(h) \leq \epsilon_\mathcal{A}(h) + 2 \,\,K \,\, W_1(\mathcal{A}, \mathcal{B}) + \lambda,
\end{equation*}
where $W_1$ is the Wasserstein distance and $\lambda$ is the combined error of the ideal hypothesis $h^*$ that minimizes both $\epsilon_\mathcal{B}(h)+\epsilon_\mathcal{A}(h)$.
\end{theorem}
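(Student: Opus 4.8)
The plan is to reduce the claim to two applications of the triangle inequality for the $L_1$-risk together with a single application of the Kantorovich--Rubinstein duality already introduced in Eq.~(\ref{eqn:wass_dist}). Throughout I would write the risk as $\epsilon_\mathcal{D}(h,g) = \mathbb{E}_{x\sim\mathcal{D}}[\abs{h(x)-g(x)}]$, so that $\epsilon_\mathcal{D}(h) = \epsilon_\mathcal{D}(h,f^*)$ with $f^*$ the shared labeling function. The key structural fact is that, because the absolute value obeys the triangle inequality, so does the risk in its two function arguments. I would also use that every hypothesis in $\mathcal{H}$ is $K$-Lipschitz, and in particular that the ideal joint minimizer $h^*$ of $\epsilon_\mathcal{A}(h)+\epsilon_\mathcal{B}(h)$ is $K$-Lipschitz.

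First I would insert $h^*$ into the target risk via the triangle inequality to get $\epsilon_\mathcal{B}(h) \leq \epsilon_\mathcal{B}(h,h^*) + \epsilon_\mathcal{B}(h^*)$, where the second summand is destined to be absorbed into $\lambda$. The crux is then to transport the cross-risk $\epsilon_\mathcal{B}(h,h^*)$ from the target domain $\mathcal{B}$ back to the source domain $\mathcal{A}$ at the cost of only a Wasserstein penalty. To that end I would set $g(x) := \abs{h(x)-h^*(x)}$ and verify, using the reverse triangle inequality, that $\abs{g(x)-g(y)} \leq \abs{h(x)-h(y)} + \abs{h^*(x)-h^*(y)} \leq 2K\norm{x-y}$, so $g$ is $2K$-Lipschitz. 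Since $\epsilon_\mathcal{B}(h,h^*) = \mathbb{E}_{x\sim\mathcal{B}}[g(x)]$ and $\epsilon_\mathcal{A}(h,h^*) = \mathbb{E}_{x\sim\mathcal{A}}[g(x)]$, applying the Kantorovich--Rubinstein duality to the $1$-Lipschitz function $g/(2K)$ delivers $\epsilon_\mathcal{B}(h,h^*) \leq \epsilon_\mathcal{A}(h,h^*) + 2K\,W_1(\mathcal{A},\mathcal{B})$.

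Finally I would apply the triangle inequality a second time to split the source cross-risk, $\epsilon_\mathcal{A}(h,h^*) \leq \epsilon_\mathcal{A}(h) + \epsilon_\mathcal{A}(h^*)$, and then chain the three estimates. Collecting the residual terms $\epsilon_\mathcal{A}(h^*) + \epsilon_\mathcal{B}(h^*)$ — which is precisely the combined error $\lambda$ of the ideal hypothesis — yields $\epsilon_\mathcal{B}(h) \leq \epsilon_\mathcal{A}(h) + 2K\,W_1(\mathcal{A},\mathcal{B}) + \lambda$, as claimed.

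The main obstacle is the Lipschitz bookkeeping in the middle step: one must confirm that $\abs{h-h^*}$ is genuinely $2K$-Lipschitz (which is exactly what produces the factor $2K$ rather than $K$), and one must invoke the duality in the correct direction so that the sign of the difference $\mathbb{E}_\mathcal{A}-\mathbb{E}_\mathcal{B}$ is compatible with the upper bound. A secondary subtlety is that the argument relies on the $L_1$ loss satisfying the triangle inequality, which is what aligns this derivation with the MAE used elsewhere in the paper; once the risk is cast in this triangle-inequality-friendly form, the remaining manipulations are routine.
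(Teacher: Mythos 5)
Your argument is correct and is essentially the proof of Theorem~1 in \cite{Shen.2018} itself; the paper does not reprove the result but simply defers to that reference, so there is nothing to contrast. The two triangle-inequality insertions of $h^*$, the observation that $\abs{h-h^*}$ is $2K$-Lipschitz, and the Kantorovich--Rubinstein step bounding $\mathbb{E}_\mathcal{B}[\abs{h-h^*}]-\mathbb{E}_\mathcal{A}[\abs{h-h^*}]$ by $2K\,W_1(\mathcal{A},\mathcal{B})$ are exactly the standard derivation, and your one point of care---that $h^*$ inherits the $K$-Lipschitz property because it lies in $\mathcal{H}$---is the right hypothesis to flag.
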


\begin{proof}{}
See Theorem 1 in \cite{Shen.2018}.
\end{proof}

\noindent
While this bound further supports our proposed adversarial learning approach from a theoretical point of view, we emphasize that the inequality is probabilistic and the error term $\lambda$ can be large in practice. Such bounds are common practice in unsupervised domain adaptation where the true outcomes from setting~\B are not available during training \citep{bendavid.2007, Redko.2017, Shen.2018}. Therefore, in this work, we additionally validate our approach empirically using an extensive series of numerical experiments.

\clearpage 
\section{Background on gradient penalty loss} 
\label{appendix:model_details}

In Eq.~(\ref{eqn:wloss}), we use a gradient penalty loss $\mathcal{L}_{\text{grad}}$ to enforce the 1-Lipschitz constraint. Originally, \cite{Arjovsky.2017} use clipping of the neural network weights to constrain the gradients of the critic function, which has proven to be unstable during training. Therefore, we use the penalty term introduced by \cite{Gulrajani.2017}, which penalizes the norm of the gradient itself. Recall the gradient norm should be at most 1 in order for the critic to be 1-Lipschitz. \citet{Gulrajani.2017} sample points $\hat{x}$ uniformly along straight lines between points sampled from \A and \B. The gradient penalty term can then be written as
\begin{equation}
    \mathcal{L}_\text{grad} =
    \mathbb{E}_{\hat{x}\sim\mathbb{P}_{\hat{x}}} \big[(\norm{\nabla_{\hat{x}}f_c(\hat{x})}_2-1)^2 \big].  
\end{equation}
For further theoretical details, we refer to \citet{Arjovsky.2017} and \citet{Gulrajani.2017}. 

\clearpage 
\section{Global minimum of the realized scheduling cost}
\label{appendix:proof_thm}

\begin{lemma}
\label{thm:min_cost}
Let $\bar{C}(\mathbf{z})$ be the realized scheduling cost as in Eq.~(\ref{eqn:sch_cost}) for a given scheduling decision $\mathbf{z}$, and let a scheduling decision $\mathbf{z}(\mathbf{\hat{y}})$ be the solution to the optimization problem in Eq.~(\ref{eqn:objective1})-(\ref{eqn:objective3}) for given predictions of throughput times $\mathbf{\hat{y}} = (\hat{y}_1, \ldots, \hat{y}_m)$. Further, let $\mathbf{\hat{y}}_\text{oracle} = (y_1, \ldots, y_m)$ be the `oracle' prediction that is equal to the realized future throughput times. Then, for every $\hat{\mathbf{y}} \in \mathbb{R}^m$, we have
\begin{equation}
\label{thm:result}
    \bar{C}\big(\mathbf{z}(\hat{\mathbf{y}}_\text{oracle})\big) \leq \bar{C}\big(\mathbf{z}(\hat{\mathbf{y}})\big).
\end{equation}
\end{lemma}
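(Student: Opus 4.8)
The plan is to exploit the fact that, at the oracle prediction, the objective of the stage-2 optimization problem coincides \emph{exactly} with the realized scheduling cost $\bar{C}$, so that the oracle decision is by construction a global minimizer of $\bar{C}$ over the feasible region and therefore at least as good as any schedule produced from imperfect predictions. Concretely, I would introduce the ``planning objective''
\[
C_{\text{plan}}(\mathbf{z}, \hat{\mathbf{y}}) = \sum_{t=1}^{T}\sum_{i=1}^{m} z_{it}\Big[c^{(early)}\max\{0,(d_i-t)-\hat{y}_i\} + c^{(tardy)}\max\{0,\hat{y}_i-(d_i-t)\}\Big],
\]
which is precisely the quantity minimized in Eq.~(\ref{eqn:objective1}), so that $\mathbf{z}(\hat{\mathbf{y}})$ is by definition a minimizer of $C_{\text{plan}}(\cdot,\hat{\mathbf{y}})$ over the feasible set $\mathcal{Z}$ cut out by the constraints. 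Comparing this with the realized cost in Eq.~(\ref{eqn:sch_cost}), the two have identical functional form and differ only in whether the predicted $\hat{y}_i$ or the realized $y_i$ enter; hence $\bar{C}(\mathbf{z}) = C_{\text{plan}}(\mathbf{z},\mathbf{y})$ for every $\mathbf{z}$.

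The argument then proceeds in two short steps. First, substituting the oracle prediction $\hat{\mathbf{y}}_\text{oracle}=\mathbf{y}$ into the stage-2 problem turns its objective into $C_{\text{plan}}(\cdot,\mathbf{y}) = \bar{C}(\cdot)$, so $\mathbf{z}(\hat{\mathbf{y}}_\text{oracle})$ is exactly a minimizer of the realized cost over $\mathcal{Z}$, i.e. $\bar{C}(\mathbf{z}(\hat{\mathbf{y}}_\text{oracle})) = \min_{\mathbf{z}\in\mathcal{Z}}\bar{C}(\mathbf{z})$. Second, the decision $\mathbf{z}(\hat{\mathbf{y}})$ produced from any other prediction is itself a feasible schedule, hence lies in $\mathcal{Z}$; the optimality of $\mathbf{z}(\hat{\mathbf{y}}_\text{oracle})$ over $\mathcal{Z}$ then yields $\bar{C}(\mathbf{z}(\hat{\mathbf{y}}_\text{oracle}))\leq \bar{C}(\mathbf{z}(\hat{\mathbf{y}}))$, which is Eq.~(\ref{thm:result}). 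Everything reduces to a direct substitution together with the definition of $\arg\min$, with no nontrivial estimates required.

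The main obstacle is the feasibility bookkeeping: the capacity constraint $\psi(z,\hat{y},t)\leq K_t$ in Eq.~(\ref{eqn:objective1}) depends on the throughput times used for planning, so a priori the feasible set $\mathcal{Z}(\hat{\mathbf{y}})$ from which $\mathbf{z}(\hat{\mathbf{y}})$ is drawn need not coincide with the oracle's feasible set $\mathcal{Z}(\mathbf{y})$ over which $\mathbf{z}(\hat{\mathbf{y}}_\text{oracle})$ is optimal. I would resolve this by conducting the comparison over the common feasible region defined by the assignment constraints $\sum_{t=1}^T z_{it}=1$, noting that the realized cost $\bar{C}$ carries no capacity penalty itself: the claim is a statement about minimizing $\bar{C}$ over schedules admissible under the \emph{true} throughput times, and the oracle decision attains that minimum by construction. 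Making this identification of feasible sets explicit is the only place where care is needed; once it is fixed, the inequality follows immediately.
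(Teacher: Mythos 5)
Your proposal is correct and follows essentially the same route as the paper's proof: substitute the oracle prediction into the stage-2 objective, observe that it then coincides with the realized cost $\bar{C}$, and conclude by the definition of $\arg\min$ that the oracle decision dominates any other feasible decision. Your explicit handling of the fact that the capacity constraint $\psi(z,\hat{y},t)\leq K_t$ depends on the predictions (so the feasible sets could in principle differ) is actually more careful than the paper, which silently carries $\psi(z,\hat{y},t)$ through all three displays without addressing this point.
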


\noindent

\begin{proof}{}
By definition of the scheduling task in stage~2 (Eq.~(\ref{eqn:objective1})-(\ref{eqn:objective3})), we have
\begin{subequations}
\begin{align}
\SingleSpacedXI
& 
\mathbf{z}(\mathbf{\hat{y}}) = \underset{{z_{it} \in \{0,1\}}}{\arg\min} \; \sum^{T}_{t=1} \sum^{m}_{i=1} z_{it} \, \Big[ c^{(early)}\, \max \big\{0,(d_{i}-t) - \hat{y}_{i} \big\} + c^{(tardy)}\, \max \big\{0,\hat{y}_{i}-(d_{i}-t) \big\}\Big]  \nonumber \\
\text{s.t.} \quad & \qquad\qquad\qquad \psi(z, \hat{y}, t) \leq K_{t},  \qquad\text{for } t = 1,\ldots,T, \nonumber \\
& \qquad\qquad\qquad \sum\limits_{t=1}^{T} z_{it} = 1, \qquad \;\;\;\;\;\;\;\; \text{for } i = 1,\ldots,m. \nonumber
\end{align}
\end{subequations}
Further, by the definition of the \emph{`oracle'} prediction, $\mathbf{\hat{y}}_{oracle}$, we have 
\begin{subequations}
\begin{align}
\SingleSpacedXI
& 
\mathbf{z}(\mathbf{\hat{y}}_{oracle}) = \underset{{z_{it} \in \{0,1\}}}{\arg\min} \; \sum^{T}_{t=1} \sum^{m}_{i=1} z_{it} \, \Big[ c^{(early)}\, \max \big\{0,(d_{i}-t) - y_{i} \big\} + c^{(tardy)}\, \max \big\{0,y_{i}-(d_{i}-t) \big\}\Big]  \nonumber \\
\text{s.t.} \quad & \qquad\qquad\qquad \psi(z, \hat{y}, t) \leq K_{t},  \qquad\text{for } t = 1,\ldots,T, \nonumber \\
& \qquad\qquad\qquad \sum\limits_{t=1}^{T} z_{it} = 1, \qquad \;\;\;\;\;\;\;\; \text{for } i = 1,\ldots,m. \nonumber
\end{align}
\end{subequations}
Then, for the realized scheduling cost given in Eq.~(\ref{eqn:sch_cost}), we have
\begin{subequations}
\begin{align}
\SingleSpacedXI
& 
\mathbf{z}(\mathbf{\hat{y}}_{oracle}) = \underset{{z_{it} \in \{0,1\}}}{\arg\min} \; \bar{C}(\mathbf{z})  \nonumber \\
\text{s.t.} \quad & \qquad\qquad\qquad \psi(z, \hat{y}, t) \leq K_{t},  \qquad\text{for } t = 1,\ldots,T, \nonumber \\
& \qquad\qquad\qquad \sum\limits_{t=1}^{T} z_{it} = 1, \qquad \;\;\;\;\;\;\;\; \text{for } i = 1,\ldots,m. \nonumber
\end{align}
\end{subequations}
Finally, since $\mathbf{z}(\mathbf{\hat{y}}_{oracle})$ is the argument that minimizes $\bar{C}(\mathbf{z})$, the result in Lemma~\ref{thm:min_cost}, Eq.~(\ref{thm:result}), follows directly for every $\mathbf{z}(\mathbf{\hat{y}}) \in \{0,1\}^m$, \ie, for every $\hat{\mathbf{y}} \in \mathbb{R}^m$.
\end{proof}

\clearpage 
\section{Implementation details (hyperparameter tuning)}
\label{appendix:hyperparameter_tuning}

\emph{Preprocessing:} Following best practice in predictive analytics \citep{Hastie.2009}, we standardized all features, that is, we transformed all features such that they have zero mean and unit variance. We also experimented with other transformations (\eg, transforming all features to the range $[-1,1]$) but without improvement. 

\emph{Implementation:} The adversarial learning model as well as all machine learning baselines and models used to generate throughput times are implemented using Python. All computations are executed on conventional office hardware (Intel Core i7, 1.8 GHz) within 20 minutes (\ie for fitting all of the models on one data sample, plus scheduling optimization afterwards). The adversarial learning approach and the deep neural network baseline are implemented using \emph{tensorflow} and \emph{keras}. We implemented the regularized linear regression (elastic net) baseline using \emph{scikit-learn}. For estimating the data generating function $\phi: X \rightarrow Y$ from the real-world data in order to simulate throughput times using the random forest and gradient boosting algorithm, we use \emph{scikit-learn} and the \emph{LightGBM} package, respectively. 

\emph{Hyperparameter tuning:} \Cref{tab:grid_search} lists the tuning ranges for all hyperparameters. Tuning was done using grid search with 5-fold cross-validation. Note that grid search is an exhaustive search in which all combinations are tested. We emphasize that our tuning is \emph{\textbf{fair}}:  for our adversarial learning approach and the deep neural network, we have the \emph{same} grid for tuning the hyperparameters. In other words, exactly the \emph{same} hyperparameters are tested during the exhaustive search, and both have thus exactly the same budget for tuning. Hence, there is \emph{no} advantage of one method over the other from better hyperparameter tuning. 

In our implementation, we tuned our adversarial learning approach as follows. Recall that, during training, we can only evaluate the prediction performance on \A, and not yet on \B due to the absence of data. Hence, for reasons of fairness, we simply use the same hyperparameters for our adversarial learning approach as for the deep neural network baseline. This is again beneficial for benchmarking, as the architecture of our adversarial learning approach is truly \emph{identical} to the deep neural network baseline. This means: differences in the performance cannot be attributed to different hyperparameters (or different tuning). Instead, they are the result of the different learning objectives. 

\clearpage

\begin{table}[H]
\begin{center}
\caption{Grid search for hyperparameter tuning}  \label{tab:grid_search}
\scriptsize
\renewcommand*{\arraystretch}{1.1}
\begin{tabular}{lll}
\toprule
\textbf{Model} & \textbf{Tuning parameters}  & \textbf{Tuning range}\\
\midrule
Elastic net & Regularization strength & 0.01, {0.1}, 1, 10, 100\\
& Regularization ratio & 0, 0.25, 0.5, 0.75, {1}\\[0.3em]
Deep neural network & Learning rate & {0.0005}\\
& Number of layers in feature extractor & {1}\\
& Number of neurons in feature extractor & 8, 16, 32, {64}\\
& Number of layers in regressor & {1}\\
& Number of neurons in regressor & {8}\\
& Epochs & {50, 100}\\
& Batch size & {32, 64}\\
& Kernel regularization strength & {0.00001}, {0.001}, 0.01\\
& Dropout rate & {0.4}, 0.5, 0.6\\[0.3em]
Adversarial learning approach (ours) & Learning rate & {0.0005}\\
& Number of layers in feature extractor & {1}\\
& Number of neurons in feature extractor & 8, 16, 32, {64}\\
& Number of layers in regressor & {1}\\
& Number of neurons in regressor & {8}\\
& Epochs & {50, 100}\\
& Batch size & {32, 64}\\
& Kernel regularization strength & {0.00001}, {0.001}, 0.01\\
& Dropout rate & {0.4}, 0.5, 0.6\\[0.3em]
\bottomrule
\end{tabular}
\end{center}
\smallskip
\end{table}

\clearpage

\section{Robustness checks for adversarial loss}
\label{appendix:robustness_adversarial_loss}

Our adversarial learning approach has two weights, namely $\alpha$ from Eq.~(\ref{eqn:obj}) and $\beta$ from Eq.~(\ref{eqn:wloss}). Importantly, these weights  are part of the loss function and not of the network architecture. Therefore, these weights do not lead to a more flexible network architecture, but simply control how the adversarial objectives are learned. In our main paper, we set both weights to a default value of 1, which implies that the different parts of the loss receive equal weight. 

Here, we conduct a sensitivity analysis to confirm that, regardless of the choice, the results remain robust. We use the experimental setup from \Cref{sec:exp_setup} and vary $\alpha, \beta = 0.8, 0.9, 1.0, 1.1, 1.2$ (while keeping the other one fixed to default value 1.0). We again report separate results for different domain shifts $\theta=1,2,3,4$. To make direct comparisons, we only report the deep neural network baseline as it has the same network architecture as our adversarial learning approach. Furthermore, to enable direct comparability along $\alpha$ and $\beta$ (and not along the overall architecture), we use the same hyperparameters from the main experiment, which are the result of a rigorous hyperparameter tuning. The results are shown in \Cref{fig:vary_alpha} for weight $\alpha$ and in \Cref{fig:vary_beta} for weight $\beta$. The deep neural network baseline (without adversarial learning objective) is outperformed by a considerable margin, regardless of how the weights are initialized. This confirms that the performance of our adversarial learning approach remains robust.

\begin{figure}
\centering
\includegraphics[width=0.7\linewidth, height=20cm]{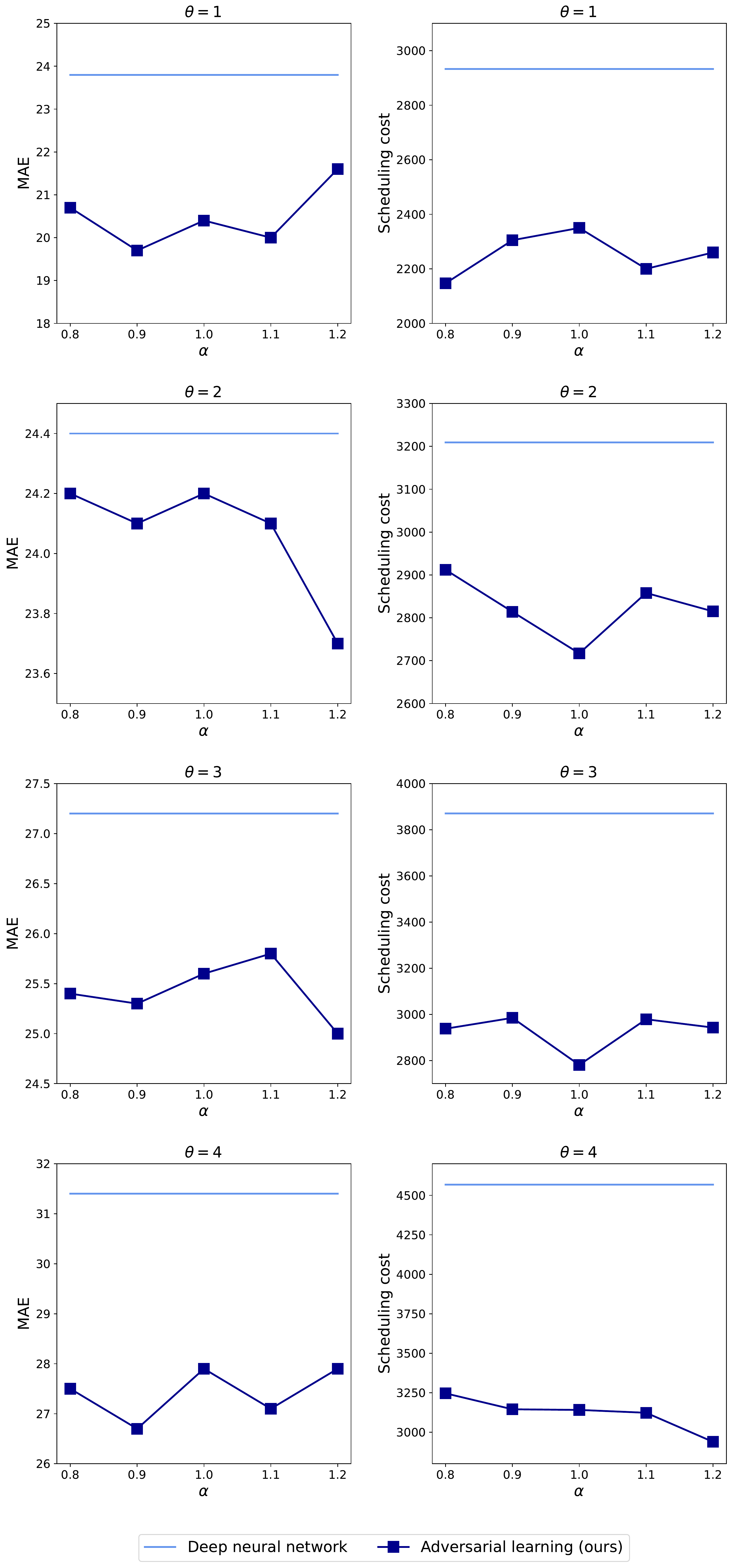}
\caption{Robustness checks for adversarial loss with respect to weight $\alpha$.}
\label{fig:vary_alpha}
\end{figure}

\begin{figure}
\centering
\includegraphics[width=0.7\linewidth, height=20cm]{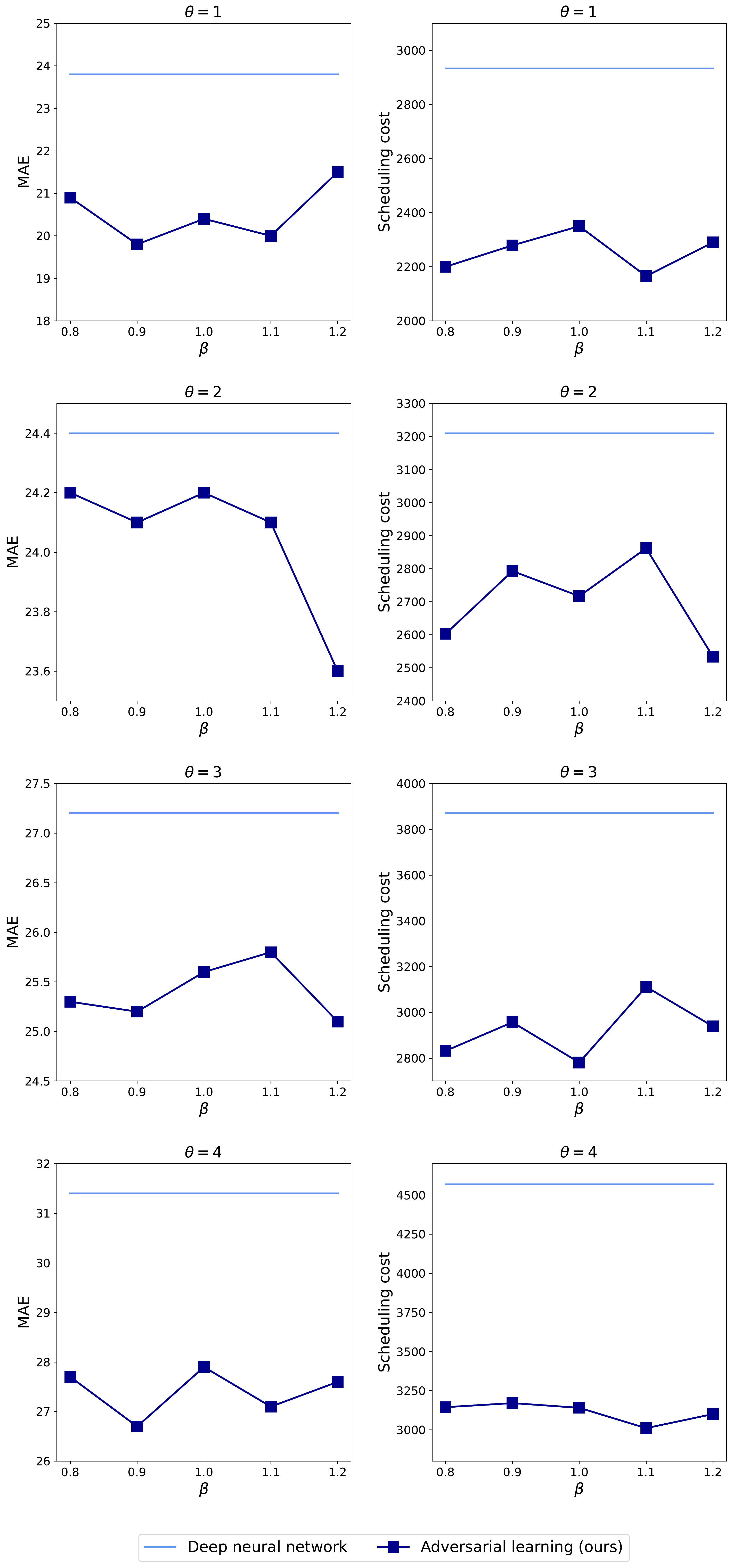}
\caption{Robustness checks for adversarial loss with respect to weight $\beta$.}
\label{fig:vary_beta}
\end{figure}

\clearpage 
\section{Comparison with machine learning baselines for handling distributional shifts}
\label{appendix:robustness_ML_baselines}

Here, we perform robustness checks, where we assume that the labels of the orders with throughput times that are within the first month of setting~\B are known (see \Cref{fig:baselines_timeline}). Hence, we assume a hypothetical scenario where, first, the data from setting~\B are acquired for one month by producing spools with short throughput times, and then, prediction and scheduling optimization is performed for the rest of the orders. Note that such a scenario is hardly applicable in practice because one does not know which spools have short throughput times, and we only use it here as a hypothetical construct to perform a robustness check. We use the main experimental setup from \Cref{sec:exp_setup}. Thereby, we rely on a rigorous hyperparameter search as described in Supplement~\ref{appendix:hyperparameter_tuning}, so that all comparisons are fair.

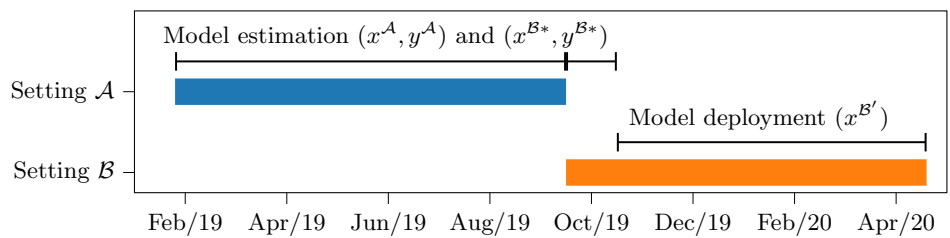
\begin{figure}
\centering
\begin{tikzpicture}
\footnotesize

\definecolor{color0}{rgb}{1,0.498039215686275,0.0549019607843137}
\definecolor{color1}{rgb}{0.12156862745098,0.466666666666667,0.705882352941177}

\begin{axis}[
width=0.75\textwidth,
height = 4 cm,
tick align=outside,
tick pos=left,
x grid style={white!69.0196078431373!black},
xmin=1, xmax=17,
xtick style={color=black},
y grid style={white!69.0196078431373!black},
ymin=-0.25, ymax=2,
xtick ={2,4,6,8,10,12,14,16},
xticklabels={Feb/19, Apr/19, Jun/19, Aug/19, Oct/19, Dec/19, Feb/20, Apr/20},
ytick ={0,1},
yticklabels={Setting $\mathcal{B}$, Setting $\mathcal{A}$},
ytick style={color=black}
]
\addplot [line width=10pt, color1]
table {%
1.8 1
9.5 1
};
\addplot [line width=10pt, color0]
table {%
9.5 0
16.6 0
};

\addplot [|-|,line width=0.75pt, black]
table {%
1.8 1.375
9.5 1.375
};

\addplot [|-|, line width=0.75pt, black]
table {%
9.5 1.375
10.5 1.375
};

\addplot [|-|, line width=0.75pt, black]
table {%
10.5 0.375
16.6 0.375
};

\draw (49.5,195) node [text width=10cm][align=center]{\baselineskip=10pt Model estimation $(x^\mathcal{A},y^\mathcal{A})$ and $(x^\mathcal{B*},y^\mathcal{B*})$ \par};

\draw (123,95) node [text width=4cm][align=center]{\baselineskip=10pt Model deployment $(x^\mathcal{B'})$ \par};

\end{axis}

\end{tikzpicture}
\caption{Timeline for numerical experiments comparing machine learning baselines (\ie, model retraining and transfer learning)}
\label{fig:baselines_timeline}
\end{figure}

The first baseline is \emph{model retraining} \citep{Cui.2018}. We implement this approach by training a deep neural network (\ie, without adversarial objective) based on the spool orders in setting~\A and spool orders in setting~\B for which we assume to know the throughput times (\ie, orders with throughput times that are less than one month). 

The second baseline is \emph{transfer learning} via fine-tuning \citep{McNamara.2017, Kouw.2018}. Here, we first train a deep neural network based on the spool orders in setting~\A. We then follow common practice in transfer learning by fine-tuning the weights of our neural network based on the orders with throughput times that are less than one month from setting~\B. 

The results of the comparison of the above baselines with our adversarial learning approach are reported in \Cref{tab:ml_baselines}. All predictions are eventually subject to the same optimization for job shop scheduling, analogous to the above numerical experiments. Our approach achieves lower prediction errors and lower scheduling costs as compared to these baselines, thereby demonstrating that our approach based on adversarial learning is superior. 

\begin{table}
\renewcommand*{\arraystretch}{1.05}
\SingleSpacedXI
\footnotesize
\centering
\caption{Main results for job shop scheduling.} 
\label{tab:ml_baselines}
\begin{tabular}{@{\extracolsep{4pt}} l cccc cccc}
\toprule
\textbf{Approach} & \multicolumn{4}{c}{Prediction error (MAE)} & \multicolumn{4}{c}{Scheduling cost} \\
\cline{2-5} \cline{6-9}
 & $\theta=1$ & $\theta=2$ & $\theta=3$ & $\theta=4$ & $\theta=1$ & $\theta=2$ & $\theta=3$ & $\theta=4$ \\
\midrule
Model retraining & 33.4 & 38.4 & 42.8 & 47.3 & 3379.5 &  3883.1 &  4321.2 & 4729.5 \\
 & ($\pm$1.8) & ($\pm$2.1) & ($\pm$3.9) & ($\pm$2.4) & ($\pm$199.6) & ($\pm$226.2) & ($\pm$400.7) & ($\pm$261.7) \\
Transfer learning &  49.0 & 48.5 & 48.6 & 50.4 & 4881.0 & 4780.1 & 4828.1  & 4966.6 \\
 & ($\pm$1.2) & ($\pm$1.6) & ($\pm$1.8) & ($\pm$2.1) & ($\pm$161.8) & ($\pm$189.1) & ($\pm$196.3) & ($\pm$229.1) \\
Adversarial learning (ours) & 23.5 & 23.7 & 24.0 & 23.8 & 2782.2 & 2802.7 & 2995.2 & 2954.2 \\
 & ($\pm$1.7) & ($\pm$1.4) & ($\pm$2.1) & ($\pm$1.5) & ($\pm$182.3) & ($\pm$183.9) & ($\pm$177.7) & ($\pm$176.2) \\
\bottomrule
\end{tabular}
\end{table}

\clearpage 
\section{Robustness in other operational contexts}
\label{appendix:robustness_other_settings}

Here, we report the results of our numerical experiment where we use a different operational context to show that our approach is transferable to other order settings. We again have two order settings: ``Johan Sverdrup Living Quarters Rig'' is used for training the throughout predictions in stage~1 of our approach (see \Cref{fig:order_images_robustness}), and ``Johan Castberg Floating Production Vessel'' is the order setting for which the job shop scheduling problem is solved. Both order settings have no chronological overlap (\ie, ``Johan Sverdrup Living Quarters Rig'' took place around 2016--2018, and ``Johan Castberg Floating Production Vessel'' in 2019). The rest of our experimental setup, including hyperparameter tuning, is identical to our main numerical experiment in \Cref{sec:exp_setup}. Specifically, both the deep neural network and our adversarial learning use exactly the {same} tuning grid, so that all performance gain must be attributed solely to the better learning objective (and not to a more flexible architecture). As a result, all comparisons are fair.

\begin{figure}
\centering
\captionsetup[sub]{font=scriptsize}
\begin{subfigure}{.35\textwidth}

  \centering
  \includegraphics[height=3cm]{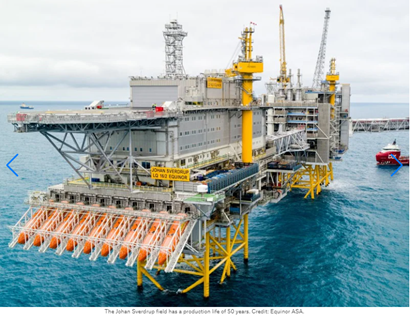}
\end{subfigure}

\vspace{0.01cm}
\caption{Operational context for robustness check: Johan  Sverdrup Living Quarters Rig (Credit: Equinor).} 
\label{fig:order_images_robustness}
\end{figure}

We report the results for prediction performance and scheduling costs (see \Cref{tab:other_setup}). The results show numerically that our approach is also effective in this operational context. In particular, we again observe the following: (1)~Our adversarial learning approach consistently outperforms off-the-shelf machine learning baselines for different magnitudes of the distributional shift in terms of both prediction performance (MAE) and realized scheduling cost. We also performed a Welch's $t$-test that compares our approach to the best off-the-shelf machine learning baseline. For $\theta = 2,3,4$, the improvements in scheduling costs are statistically significant at common significance thresholds. (2)~The performance gains of our approach increase as the magnitude of the distributional shift is increased. In sum, job shop scheduling using our adversarial learning approach is superior over job shop scheduling using off-the-shelf machine learning by a considerable margin. Thereby, we show the robustness of our results a different operational context at Aker. 

\begin{table}
\renewcommand*{\arraystretch}{1.05}
\SingleSpacedXI
\footnotesize
\centering
\caption{Results for a different operational context.} 
\label{tab:other_setup}
\begin{tabular}{@{\extracolsep{4pt}} l cccc cccc}
\toprule
& \multicolumn{4}{c}{Prediction error (MAE)} & \multicolumn{4}{c}{Scheduling cost} \\
\cline{2-5} \cline{6-9}
& $\theta=1$ & $\theta=2$ & $\theta=3$ & $\theta=4$ & $\theta=1$ & $\theta=2$ & $\theta=3$ & $\theta=4$ \\
\midrule
Linear regression (regularized) & 22.3 & 25.6 & 29.7  & 36.2 & 2833.6 & 3487.5 & 4089.4 & 5065.7 \\
 & ($\pm$0.8) & ($\pm$1.3) & ($\pm$3.3) & ($\pm$6.3) & ($\pm$103.3) & ($\pm$236.7) & ($\pm$422.8) & ($\pm$770.4) \\
Deep neural network & 27.4 & 34.4 & 47.6  & 66.2 & 3567.8 &  4622.5 & 6306.6 & 8182.3 \\
 & ($\pm$1.5) & ($\pm$4.5) & ($\pm$6.6) & ($\pm$10.7) & ($\pm$266.9) & ($\pm$565.8) & ($\pm$759.7) & ($\pm$763.2) \\
Adversarial learning (ours) & 22.2 & 24.5 & 24.6 & 23.3 & 2576.6 & 2782.7 & 2818.5 & 2701.6 \\
 & ($\pm$1.3) & ($\pm$1.3) & ($\pm$1.6) & ($\pm$2.4) & ($\pm$185.3) & ($\pm$161.0) & ($\pm$145.0) & ($\pm$252.6) \\
\midrule
Oracle (lower bound) & 0.0 & 0.0 & 0.0 & 0.0 & 379.2  & 363.3  & 373.5  & 403.1 \\
 & --- & --- & --- & --- & ($\pm$67.3) & ($\pm$69.6) & ($\pm$45.3)  & ($\pm$71.5) \\
\bottomrule
\end{tabular}
\end{table}

\end{document}